\documentclass{article}[11pt]
\usepackage{graphicx} 
\usepackage{hyperref}
\usepackage{xcolor}
\usepackage{amsthm}
\usepackage{tikz}
\usepackage{amsmath}
\usepackage{amssymb}
\newcommand{\hsp}{\hspace{.5mm}}
\newcommand{\ubar}{\underline{u}}
\newcommand{\Cbar}{\underline{C}}
\newcommand{\hba}{\underline{h}}
\newcommand{\be}{\begin{equation}}

\newcommand{\hb}{\underline{h}}
\newcommand{\ee}{\end{equation}}
\newcommand{\inta}{\int_{v_1}^{v_a}}
\newcommand{\dv}{\hspace{1mm}\text{d}v}
\newcommand{\dvprime}{\hspace{1mm} \text{d}v^{\prime}}
\newcommand{\upr}{u^{\prime}}

\newcommand\restri[2]{{
		\left.\kern-\nulldelimiterspace 
		#1 
		\right|_{#2} 
}}

\newtheorem{proposition}{Proposition}[section]

\newtheorem{theorem}{Theorem}[section]

\newtheorem{lemma}{Lemma}[section]

\numberwithin{equation}{section}
\allowdisplaybreaks[4]

\title{Formation of trapped surfaces for the spherically symmetric Einstein-Yang-Mills system with non-trivial incoming data }
\author{Nikolaos Athanasiou, Puskar Mondal, Shing-Tung Yau}
\date{August 2026}

\begin{document}

\maketitle
\begin{abstract}
\noindent We establish a trapped-surface formation theorem for the spherically symmetric Einstein--Yang--Mills equations in a double-null gauge. The theorem concerns characteristic initial data posed on a pair of transversely intersecting null hypersurfaces and allows non-trivial incoming data. The proof extends the singular characteristic method of An-Lim \cite{AnLim} for the Einstein--Maxwell--scalar field system to the non-abelian Yang--Mills setting, where the curvature coupling and gauge-field nonlinearities introduce new structural difficulties. This paper constitutes the first part of a program toward weak cosmic censorship for the Einstein--Yang--Mills system.  
\end{abstract}

\section{Introduction}

\indent The Einstein--Yang--Mills equations provide one of the simplest geometric
models in which the nonlinear interaction between gravity and a genuinely
non-Abelian gauge field can be studied beyond perturbative regimes. The behavior of this system in spherical symmetry is substantially richer than that of
the Einstein--Maxwell system. Indeed, spherically symmetric electrovacuum
solutions are locally contained in the Reissner--Nordstr\"om family, by the
electrovacuum version of the celebrated Birkhoff's theorem.  In particular,
spherical symmetry eliminates all radiative Maxwell degrees of freedom.
For the Einstein--Yang--Mills system, by contrast, the purely magnetic
\(SU(2)\) ansatz retains a nonlinear scalar degree of freedom \(w\), and
the reduced equations remain a genuinely coupled system of wave and
transport equations.

This distinction in rigidity is already visible in the static theory. Bartnik and
McKinnon numerically discovered a countable family of globally regular,
particle-like solutions \cite{bartnik1988particlelike}, whose existence was
subsequently established rigorously by Smoller, Wasserman, Yau, and McLeod
\cite{smoller1991smooth}. Smoller, Wasserman, and Yau later proved that, for
every prescribed horizon radius, there exists an infinite family of static
\(SU(2)\) Einstein--Yang--Mills black holes with a regular event horizon
\cite{smoller1993existence}. These solutions are indexed by a winding
number, have finite ADM mass, and approach a Schwarzschild metric at
spatial infinity at rate \(O(r^{-2})\), despite carrying nontrivial
Yang--Mills fields in the interior. Thus the geometry at infinity does not
determine the non-Abelian field in the exterior region. The corresponding
solutions are nevertheless unstable \cite{straumann1990instability}, which
raises the dynamical question underlying the present work:

\begin{quote}
Under what quantitative conditions can regular characteristic data for the
spherically symmetric Einstein--Yang--Mills system evolve to form a trapped
surface?
\end{quote}

\noindent The purpose of this article is to establish a trapped-surface formation
criterion for the purely magnetic \(SU(2)\) Einstein--Yang--Mills system.
The initial data are prescribed on two intersecting null hypersurfaces. A
principal feature of our result is that the data on the incoming initial
hypersurface are allowed to be geometrically and gauge-theoretically
nontrivial. Thus the argument does not rely on the presence of an initially
Minkowskian incoming region. The theorem is formulated entirely in terms of
the Hawking mass, the relative radial width of the initial pulse, and the
non-Abelian magnetic potential. For details on the local existence theorem for this system, we refer the reader to \cite{Li_Wang}.

\subsection{The spherically symmetric reduction in a double null gauge}

Let \((\mathcal M,g)\) be a connected, time-oriented, globally hyperbolic
four-dimensional Lorentzian manifold admitting an effective isometric
\(SO(3)\)-action whose principal orbits are spacelike two-spheres. We denote
by \(\Gamma\) the set of fixed points of the action, when it is present, and
refer to \(\Gamma\) as the axis of symmetry. On the two-dimensional quotient
\(\mathcal Q=\mathcal M/SO(3)\) we introduce double-null coordinates
\((u,v)\). The construction is as follows:

\begin{enumerate}
\item We fix a point $q$ on the Penrose diagram and we label the outgoing null geodesic intersecting $q$ by $C$ and the incoming null geodesic intersecting $q$ by $\Cbar$. Given that the Penrose diagram suppresses two angular dimensions, in the actual spacetime $C$ and $\Cbar$ are null hypersurfaces.
 \item We parametrize $C$ and $\Cbar$ by the variables $v, u$ respectively and at the intersection of $\Gamma$ and $\Cbar$, we normalize $u=0$. Similarly, at the intersection of $\Gamma$ and $C$ we set $v=0$. Let the coordinates of $q$ be $q(u_0,v_1)$. 

 \item Once coordinates have been fixed on $C$ and $\Cbar$ we can define a \textit{double-null coordinate} system on the domain of dependence of $C\cup \Cbar$ in a standard way, by looking at any given fixed point $p$ in this domain and tracing the point back through an incoming and an outgoing null geodesic to $C$ and $\Cbar$ respectively. If the coordinates where the geodesics passing through $p$ intersect $\Cbar$ and $C$ respectively are $u$ and $v$, we define $p:=p(u,v)$. Let $\mathcal{D}(0,v_1)$ denote the spacetime region that is bounded by $C,\Cbar$ and $\Gamma$.
\end{enumerate}

\noindent The corresponding null hypersurfaces are
\[
C_u:=\{u=\mathrm{constant}\},
\qquad
\underline C_v:=\{v=\mathrm{constant}\},
\]
and their intersections
\[
S_{u,v}:=C_u\cap\underline C_v
\]
are the symmetry spheres. The area radius \(r\) is defined invariantly by
\begin{equation}
    \operatorname{Area}(S_{u,v})=4\pi r^2(u,v).
\end{equation}
In a double-null gauge the metric takes the form
\begin{equation}
    g=-4\Omega^2(u,v)\,\mathrm du\,\mathrm dv
      +r^2(u,v)\,d\sigma_{\mathbb S^2},
    \label{eq:intro-double-null-metric}
\end{equation}
where \(\Omega>0\) is the lapse function and \(d\sigma_{\mathbb S^2}\) is the standard round
metric on \(\mathbb S^2\).

\vspace{3mm}

 \noindent We are concerned with the dynamics of a Yang-Mills field with gauge group $SU(2)$, the real Lie group of unitary matrices of determinant $1$. 
\noindent Let \(\{\tau_1,\tau_2,\tau_3\}\) be a basis of of the associated Lie algebra 
\(\mathfrak{su}(2)\) (which can be identified with the set of all anti-hermitian, traceless $2\times 2$ matrices) satisfying
\[
    [\tau_i,\tau_j]=\varepsilon_{ijk}\tau_k.
\]Here, $\varepsilon_{ijk}$ for $i,j,k \in \{1,2,3\}$ is the Levi-Civita symbol. For example, we can set

\begin{equation}
\tau_1 := \frac{i}{2}\begin{pmatrix}
0 & 1 \\ 1 & 0
\end{pmatrix},\hspace{1mm} \tau_2 := \frac{1}{2}\begin{pmatrix}
0 & -1 \\ 1 & 0
\end{pmatrix}, \hspace{1mm} \tau_3 := \frac{i}{2}\begin{pmatrix}
1 & 0 \\ 0 & -1
\end{pmatrix}.
\end{equation}
We henceforth impose the spherically symmetric \textit{purely magnetic} ansatz, where the  Yang-Mills connection $1-$form  $A$ takes the following form: 
\begin{equation}
    A
    =
    w\,\tau_1\,\mathrm d\theta
    +
    \bigl(\cot\theta\,\tau_3+w\,\tau_2\bigr)
       \sin\theta\,\mathrm d\phi ,
    \label{eq:intro-magnetic-ansatz}
\end{equation}
where \(w=w(u,v)\) and $\theta, \phi$ are the standard angular coordinates on $S^2$. Its curvature is
\begin{equation}
\begin{split}
    F
    = dA + A \wedge A =
    \mathrm dw\wedge
    (
       \tau_1\,\mathrm d\theta
       +\sin\theta\,\tau_2\,\mathrm d\phi
    ) 
        -(1-w^2)\tau_3\sin\theta\,
       \mathrm d\theta\wedge\mathrm d\phi .
\end{split}
\label{eq:intro-curvature}
\end{equation}
The reason \eqref{eq:intro-magnetic-ansatz} is thus named is that $F_{uv}=0$, meaning that the electric charge vanishes.

\vspace{3mm}

\noindent We set
\begin{equation}
    Q:=w^2-1.
    \label{eq:intro-Q}
\end{equation}
Be it noted from the start that $Q$ should not be thought of as a charge per se. It is, nevertheless, not without physical meaning. Within the purely magnetic reduction, \(Q/r^2\) is the coefficient of the
angular curvature, in the sense that
\[ F_{AB} = \frac{Q}{r^2}\tau_3 \varepsilon_{AB},   \]where the capital Latin letters $\{A,B 
\}$ refer to indices on the $2-$sphere $S^2$. We shall refer to \(Q\) as the magnetic
charge aspect, following the terminology used in the spherical
Einstein--Yang--Mills literature. It is important, however, to note   that \(Q\) is
neither a conserved Gauss charge nor a gauge-invariant Yang--Mills charge
in the unrestricted theory. In particular,
\begin{equation}
    \partial_a Q=2w\,\partial_a w,
    \qquad a\in\{u,v\},
\end{equation}
implying that $Q$ has nontrivial dynamics. We shall see that this quantity will be important for us later on. \vspace{3mm}

\noindent The Spherically Symmetric Einstein-Yang-Mills system (SSEYM) takes the form

\vspace{3mm}

\begin{equation} \label{eq:Einsteineq}
R_{\mu\nu}-\frac{1}{2} R \hsp g_{\mu\nu} = 2 T_{\mu\nu},
\end{equation}
\begin{equation}\label{eq:YMeq}
\hat{D}^{\mu}F_{\mu\nu}=0.
\end{equation}Here $\hat{D}_{\alpha}:= D_{\alpha}+[A_\alpha,\cdot]$ denotes the gauge-covariant derivative, $[\cdot,\cdot]$ is the Lie bracket on $\mathfrak{s}\mathfrak{u}(2)$ and \[ T_{\mu\nu} = \langle {F_{\mu}}^{\alpha}, F_{\nu \alpha}\rangle - \frac{1}{4}g_{\mu\nu}\langle F^{\rho\sigma},F_{\rho\sigma}\rangle.\]

\noindent Define
\begin{equation}
    h:=\Omega^{-2}\partial_v r,
    \qquad
    \underline h:=\partial_u r.
    \label{eq:intro-null-expansions}
\end{equation}
The Hawking mass \(m\) and the mass ratio \(\mu\) are given by
\begin{equation}
    m
    :=
    \frac r2\bigl(1+h\underline h\bigr),
    \qquad
    \mu:=\frac{2m}{r}.
    \label{eq:intro-Hawking-mass}
\end{equation}
Equivalently,
\begin{equation}
    1-\mu=-h\underline h
          =g^{\alpha\beta}
             \partial_\alpha r\,\partial_\beta r.
\end{equation}
In a regular, nontrapped region we orient the null coordinates so that
\[
    h>0,
    \qquad
    \underline h<0.
\]
A symmetry sphere \(S_{u,v}\) is \textit{trapped} precisely when
\[
    \partial_u r(u,v)<0,
    \qquad
    \partial_v r(u,v)<0.
\]

\noindent Under the ansatz \eqref{eq:intro-magnetic-ansatz}, equations
\eqref{eq:Einsteineq}-\eqref{eq:YMeq} reduce to
\begin{align}
    \partial_v h
    =
    -2\Omega^{-2}
       \frac{(\partial_v w)^2}{r},
    \label{eq:intro-Raychaudhuri-v}
    \end{align}
    \begin{align}
    \partial_u\bigl(\Omega^{-2}\underline h\bigr)
    =
    -2\Omega^{-2}
       \frac{(\partial_u w)^2}{r},
    \label{eq:intro-Raychaudhuri-u}
    \end{align}
    \begin{align} \partial_v\underline h
    =
    \partial_u(\Omega^2h)
    =
    \frac{\Omega^2}{r}
    \left(
        -\mu+\frac{Q^2}{r^2}
    \right),
    \label{eq:intro-cross-focusing}
    \end{align}
    \begin{align}
    \partial_u \partial_v \Omega = \partial_v \partial_u \Omega = \Omega^2 \frac{\mu}{r^2}- 2 \hsp \Omega^2 \frac{Q^2}{r^4}, \label{eq:intro-Omegaequation} \end{align}
    \begin{align}
    \partial_u\partial_v w
    +
    \Omega^2\frac{Q\hsp w}{r^2}
    =0.
    \label{eq:intro-YM-wave}
\end{align}
The Hawking mass satisfies the transport identities
\begin{align}
    \partial_v m
    &=
    \Omega^2h\,\frac{Q^2}{2r^2}
    -
    \Omega^{-2}\underline h\,(\partial_vw)^2,
    \label{eq:intro-mass-v}
    \\
    \partial_u m
    &=
    \underline h\,\frac{Q^2}{2r^2}
    -
    h\,(\partial_uw)^2.
    \label{eq:intro-mass-u}
\end{align}
Thus, as long as \(h>0\) and \(\underline h< 0\), the Hawking mass is
nondecreasing in the outgoing direction and nonincreasing in the incoming
direction. 

\vspace{3mm}

\par \noindent A characteristic initial data set for the SSEYM system consists of a triple $(r,\Omega, \partial_r w)$ prescribed on initial outgoing and incoming hypersurfaces $C, \underline{C}$ respectively, satisfying appropriate compatibility conditions on the initial sphere of intersection,  that further satisfies the following boundary conditions on the axis $\Gamma$:

\[ \restri{r}{\Gamma}=0,\restri{m}{\Gamma}=0 , \restri{Q}{\Gamma}=0    \]

There are two types of initial conditions to be considered for the given system:

\vspace{3mm}

\begin{itemize}
\item The first type of initial conditions concerns the purely geometric part, so to speak. In other words, this set of initial conditions is expected to be the same for all physical models. First of all, on the axis $\Gamma$ there must hold $r=0$. Moreover, a byproduct of spherical symmetry is that as we consider points infinitesimally close to the center, its incoming geodesics, as also explained in \cite{AnLim}, essentially become outgoing, which forces the compatibility assumption $\partial_v r(u_0,0)= -\partial_u r(u_0,0)$. Moreover, on $C\cup \Cbar$ we choose the normalization $\Omega^2 = \frac{1}{4}$, in coherence with \cite{Li_Wang}. This fixes the lapse initially and gets rid of a redundant degree of freedom.

\item The second type of initial conditions is matter dependent. Here, $\restri{Q}{\Gamma}=0$ (equivalently $w=1$ on $\Gamma$) is essential to ensuring the regularity of the Yang-Mills equation on the axis (see also \cite{Li_Wang}). As far as the free Yang-Mills parameter is concerned, we can prescribe $w$ freely on $C \cup \Cbar$. This completely determines the first derivatives of $w$ on $C$ and $\Cbar$. The evolution of $w$ on $C$ and $\Cbar$ is then determined by equation \eqref{eq:intro-YM-wave} (where $\Omega^2=\frac{1}{4}$ because of the lapse fixing condition we imposed).
\end{itemize}

\noindent For a more comprehensive explanation of these variables we refer the reader to \cite{Li_Wang}, from which the equations are taken. 
\noindent Equations \eqref{eq:intro-Raychaudhuri-v}--\eqref{eq:intro-mass-u}
display the central structural feature of the problem. The derivative
terms of \(w\) enter the Raychaudhuri equations with a favorable sign and
therefore enhance focusing. The magnetic term \(Q^2/r^3\), on the
other hand, enters the cross-focusing equation
\eqref{eq:intro-cross-focusing} with the opposite sign. Consequently,
Yang--Mills energy contributes positively to the Hawking mass while part
of the same energy simultaneously opposes gravitational focusing. This
competition has no analogue in the uncharged Einstein--scalar field
system.

\subsection{Statement of the main result}

Let
\[
    C
    :=
    \{u=u_0,\ v_1\leq v\leq v_2\},
    \qquad
    \underline C
    :=
    \{v=v_1,\ u_0\leq u\leq u_*\}
\]
be two intersecting initial null hypersurfaces. We consider smooth
characteristic data satisfying the constraint equations, the standard
orientation conditions
\[
    \partial_vr>0,
    \qquad
    \partial_ur<0,
\]
and the absence of initially trapped symmetry spheres. Set 
\begin{equation}
    r_1:=r(u_0,v_1),
    \qquad
    r_2:=r(u_0,v_2),
\end{equation}
and define the initial relative radial width and normalized mass
concentration by
\begin{equation}
    \delta_0
    :=
    \frac{r_2-r_1}{r_2},
    \qquad
    \eta_0
    :=
    \frac{m(u_0,v_2)-m(u_0,v_1)}{r_2}.
    \label{eq:intro-eta-delta}
\end{equation}
We also introduce
\begin{equation}
    \varepsilon
    :=
    \sup_{C\cup\underline C}
    \frac{Q^2}{r^2},
    \qquad
    0\leq\varepsilon<1.
    \label{eq:intro-epsilon}
\end{equation}

\noindent For \(0<\omega<2/3\), define
\begin{equation}
\begin{split}
    G_\omega(x)
    &:=
    \frac{1+\frac{\omega}{2}}
         {1-\frac{\omega}{2}}
    \frac{1}{(1+x)^2}
    \Bigg[
       \left(
          \frac{2^{1-\frac{\omega}{2}}}{\omega}
          +
          \frac{1}
          {2^{1+\frac{\omega}{2}}
           \left(1+\frac{\omega}{2}\right)}
       \right)
       x^{1-\frac{\omega}{2}}
       -
       \frac{2}{\omega}x
       -
       \frac{x^2}{1+\frac{\omega}{2}}
    \Bigg].
\end{split}
\label{eq:intro-Gomega}
\end{equation}


\noindent We may now state a rough version of the principal theorem of our paper. A fully rigorous statement will be given later on in Theorem \ref{maintechnical}.

\begin{theorem}[Trapped surface formation]
\label{thm:intro-main}
Let smooth characteristic initial data for the spherically symmetric,
purely magnetic \(SU(2)\) Einstein--Yang--Mills system be prescribed on
\(C\cup\underline C\). Assume that the data satisfy the regularity and no-initial-trapping hypotheses, the \textit{magnetic
subextremality} condition

\[m(u,v_1) -\lvert Q\rvert(u,v_1) \geq 0, \hspace{2mm} \text{for all} \hsp \hsp u: (u,v_1)\in \Cbar \]
 \noindent and
\[
    \varepsilon<1.
\]
Fix \(0<\omega<2/3\). There exists a quantitative short-segment constant
\(\Delta_*>0\), depending only on $\omega, \varepsilon$ and $\sup_{\Cbar} \frac{w^2}{r}$, such that the following holds
whenever, in the fixed double-null normalization,
\[
    0<v_2-v_1\leq\Delta_*.
\]

\noindent Suppose that
\begin{equation}
\begin{split}
    \eta_0
    >
    \max\Bigg\{
       &\frac{13\varepsilon}{\omega}
        +G_\omega(\delta_0),
        \\
       &\frac{9}
       {2^{1+\frac{\omega}{2}}
        (1+\delta_0)^2}
        \delta_0^{1-\frac{\omega}{2}}
        +G_\omega(\delta_0)
    \Bigg\},
\end{split}
\label{eq:intro-main-threshold}
\end{equation}where \[ \eta_0 := \frac{m(u_0,v_2)-m(u_0,v_1)}{r(u_0,v_2)}, \hspace{3mm} \delta_0 := \frac{r(u_0,v_2)-r(u_0,v_1)}{r(u_0,v_2)}.  \]
Let \(u_*\) be the first value of \(u\), within the regular development,
such that
\begin{equation}
    r(u_*,v_2)
    =
    \frac{3\delta_0}{1+\delta_0}\,r_2.
    \label{eq:intro-u-star}
\end{equation}
Then the characteristic development contains a trapped symmetry sphere
in
\[
    [u_0,u_*]\times[v_1,v_2].
\]
More precisely, there exists
\((u_{\mathrm{tr}},v_{\mathrm{tr}})\) in this rectangle such that
\[
    \partial_u r(u_{\mathrm{tr}},v_{\mathrm{tr}})<0,
    \qquad
    \partial_v r(u_{\mathrm{tr}},v_{\mathrm{tr}})<0.
\]
\end{theorem}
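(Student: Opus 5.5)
We argue by contradiction, in the spirit of An--Lim's singular characteristic method. Suppose no trapped sphere exists in $[u_0,u_*]\times[v_1,v_2]$. Then the region is free of trapped and marginally trapped spheres. Since $\partial_u r<0$ initially and no trapped spheres form, $\underline h<0$ throughout the rectangle. If $h$ ever vanished while $\underline h<0$, then $\partial_v r\le 0$ there and $\partial_v h<0$ by \eqref{eq:intro-Raychaudhuri-v} would produce a trapped sphere immediately. So we may assume $h>0$, $\underline h<0$ on the whole rectangle. In that regime the monotonicity in \eqref{eq:intro-mass-v}--\eqref{eq:intro-mass-u} holds.

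The first step is to propagate the smallness of the magnetic term. We integrate \eqref{eq:intro-YM-wave} in $v$ over the short segment $[v_1,v_2]$, with $v_2-v_1\le\Delta_*$. Using $\Omega^2$ bounds obtained by integrating \eqref{eq:intro-Omegaequation} and a bootstrap, we get that $\partial_u w(u,v)$ differs from $\partial_u w(u,v_1)$ by $O(\Delta_*\sup_{\underline C} w^2/r)$. This shows $Q^2/r^2\le 2\varepsilon$, or more precisely $\varepsilon(1+o(1))$, throughout the rectangle. This is where the dependence of $\Delta_*$ on $\sup_{\underline C}w^2/r$ enters. The magnetic subextremality $m\ge|Q|$ on $\underline C$ is propagated the same way. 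It guarantees that on the incoming side the term $-\mu+Q^2/r^2$ in \eqref{eq:intro-cross-focusing} is at most $-\mu+\mu^2/4\le0$, so $\partial_v\underline h\le 0$ up to errors of size $\varepsilon$.

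The second step is the core mass-difference estimate. Set $\eta(u):=\bigl(m(u,v_2)-m(u,v_1)\bigr)/r(u,v_2)$ and $\delta(u):=\bigl(r(u,v_2)-r(u,v_1)\bigr)/r(u,v_2)$. Using \eqref{eq:intro-mass-u}, together with $\partial_u r<0$ and the propagated $Q^2/r^2$ bound, we derive a differential inequality in the variable $x=r(u,v_2)/r_2$. The loss of mass along $u$ at $v_2$ relative to $v_1$ is controlled by the $(\partial_u w)^2$ term with coefficient $h$. Bounding that term requires the monotone quantity $\Omega^{-2}\underline h$ from \eqref{eq:intro-Raychaudhuri-u} and the ratio $h(u,v_2)/h(u,v_1)$. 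The $Q^2$ contributions produce an error of the form $C\varepsilon/\omega$. The parameter $\omega$ appears when we absorb the cross term via a weighted Grönwall argument with weight $r^{-\omega/2}$. Integrating the resulting ODE inequality from $x=1$ down to $x=3\delta_0/(1+\delta_0)$ yields the explicit function $G_\omega$ and the power $\delta_0^{1-\omega/2}$. The conclusion is that, under \eqref{eq:intro-main-threshold}, at $u=u_*$ we have
\[
2\bigl(m(u_*,v_2)-m(u_*,v_1)\bigr)>r(u_*,v_2).
\]

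The third step is the contradiction. The inequality above, together with $m(u_*,v_1)\ge0$, gives $2m(u_*,v_2)>r(u_*,v_2)$, i.e. $\mu>1$. But $1-\mu=-h\underline h>0$ in the assumed untrapped region, which contradicts the bound. Hence some sphere in the rectangle has $\partial_v r<0$ with $\partial_u r<0$. We must also check that $u_*$ is reached within the regular development, via the local existence of \cite{Li_Wang} and the extension criterion under $r$ bounded below.

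The main obstacle is the second step, specifically controlling the repulsive $Q^2/r^3$ term. It enters both the mass transport and the cross-focusing equation, and because $Q$ is dynamical, $\varepsilon$ is only an initial bound. Keeping the error linear in $\varepsilon/\omega$, with constant $13$, requires the short-segment propagation of step one to be sharp. I would try first to bootstrap $Q^2/r^2\le 13\varepsilon/(2\cdot\text{const})$ and close it using smallness of $\Delta_*$.
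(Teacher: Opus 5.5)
Your overall frame (contradiction, an inequality for $\eta$ in the variable $x=r_2(u)/r_2(u_0)$, the ratio $h_2/h_1$, monotonicity of $\Omega^{-2}\partial_u r$) matches the paper. Step 1, however, is a non sequitur, and it discards the structure the rest of the argument needs.

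A bound on $\partial_u w(u,v)-\partial_u w(u,v_1)$ tells you nothing about $Q(u,v)$ relative to $Q(u,v_1)$: at fixed $u$, $Q$ changes through $\partial_vQ=2w\,\partial_vw$, and $\partial_v w$ is exactly the outgoing pulse that carries $\eta_0$. The paper integrates in $v$ from $\underline C$ and applies Cauchy--Schwarz. It controls $\int_{v_1}^{v_a}(\partial_vw)^2\,dv\le\frac{2}{1-\varepsilon}(m_a-m_1)$ through \eqref{eq:intro-mass-v} and the lower bound $-\Omega^{-2}\partial_ur\ge\frac{1-\varepsilon}{2}$. The short segment is used only to make $\int w^2\,dv$ small. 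The result is $\frac{Q_a^2}{r_a^2}\le\frac{\omega}{4}\eta_a+\frac{2Q_1^2}{r_a^2}$, with $\eta_a$ the normalized mass increment.

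This $\eta$-proportional form is essential in two places.
\begin{itemize}
\item Together with $m_1\ge Q_1^2/r_1$ on $\underline C$ (subextremality plus $2m\le r$), it gives $Q^2/r^2\le 2m/r$ pointwise. So what is propagated is not $m\ge|Q|$ but this inequality, and it yields the \emph{exact} sign $\partial_u\partial_vr\le0$. A version that holds only up to $O(\varepsilon)$ errors, as in your Step 1, does not give $r_2(u)-r_1(u)\le r_2(u_0)-r_1(u_0)$. Both $\delta\le\frac12$ and $\delta(u)\le\delta_0/(x(1+\delta_0)-\delta_0)$ rest on that inequality.
\item Under a bootstrap on $\eta$ itself ($\eta\ge12\varepsilon/\omega$, closed to $13\varepsilon/\omega$ by the same differential inequality), the bound becomes $Q^2/r^2\le\frac{\omega}{2}\eta$. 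The magnetic term is then absorbed into the damping $-\eta/x$. This is the origin of the factor $1-\frac{\omega}{2}$ in $g$ and of the threshold $13\varepsilon/\omega$.
\end{itemize}
A bound $Q^2/r^2\lesssim\varepsilon$ enters only as a forcing term of size $\varepsilon/x$, and your plan gives no route from it to the stated threshold. Bootstrapping $Q^2/r^2$, as you propose at the end, targets the wrong quantity; the bootstrap must be on $\eta$.

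The second missing ingredient is the structural estimate for $\Theta=\partial_uw_2-\partial_uw_1$, which is where the difference you compute in Step 1 actually belongs. In the $\eta$-inequality, the $(\partial_uw)^2$ terms carry the prefactor $-h_2/\underline h_2$, which you do not control. After using $h_1/h_2\ge e^{(1-\omega/2)\eta}$ and completing the square in $\partial_uw_1$, what remains is $\bigl(1+\frac{1}{(1-\omega/2)\eta}\bigr)\Theta^2$. An absolute bound $\Theta=O(\Delta_*L)$ cannot close this. You need
\[
\Theta^2\le\Bigl(1+\frac{\omega}{2}\Bigr)\frac{-\underline h_2}{h_2}\,(m_2-m_1)\Bigl(\frac1{r_1}-\frac1{r_2}\Bigr).
\]
Its factor $-\underline h_2/h_2$ cancels the prefactor. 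Its factor $\eta\,\delta$ cancels the $1/\eta$ and produces the terms $\delta_0/(x(1+\delta_0)-\delta_0)$ in $f$ and $g$.

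The paper obtains this bound by Cauchy--Schwarz on $\int\Omega^2Qw/r^2\,dv$ after the change of variables $dv=dr/\partial_vr$. It uses $Q^2/r^2\le\frac{\omega}{2}\eta$, $\partial_ur_2\le\partial_ur$ (from the exact monotonicity), $\Omega^{-2}\partial_vr$ nonincreasing in $v$, and the smallness of $\int w^2/r\,dv$. No interior bounds on $\Omega$ are needed. Without these ingredients, your Step 2 does not produce a closed differential inequality, and the explicit $G_\omega$ and the power $\delta_0^{1-\omega/2}$ cannot be reached.
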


\noindent A few words on the intuition behind the bound \eqref{eq:intro-main-threshold}: This lower bound separates the contribution
of the incoming magnetic field from the mass concentration carried by the
outgoing initial segment. The term \(13\varepsilon/\omega\) measures the
loss caused by the non-Abelian magnetic potential, whereas the remaining
terms describe the geometric cost of concentrating a definite amount of
Hawking mass inside a short relative radial interval. In the regime
\(\varepsilon\to0\), the criterion reduces to the corresponding
uncharged concentration mechanism.

\subsection{Difficulties and strategy of the proof}

The proof is not a direct adaptation of the trapped-surface arguments for
the Einstein--scalar field system. There are three related obstructions.

First, the magnetic contribution has two competing effects. From
\eqref{eq:intro-mass-v}, both the kinetic energy
\((\partial_vw)^2\) and the magnetic potential \(Q^2/r^2\) increase the
Hawking mass along \(C_{u_0}\). However, the same potential appears with a
defocusing sign in \eqref{eq:intro-cross-focusing}. A lower bound for the
mass increment alone therefore does not imply a corresponding lower bound
for the focusing strength. One must separate the genuinely focusing part
of the mass from the portion that can be offset by the magnetic potential.

Secondly, \(Q\) is dynamical. In the Einstein--Maxwell system the electric
charge is conserved and its contribution may be treated as a fixed
parameter. Here
\[
    Q=w^2-1,
    \qquad
    \partial_aQ=2w\,\partial_aw,
\]
and the magnetic potential is coupled directly to the radiative component
of the Yang--Mills field through
\eqref{eq:intro-YM-wave}. Hence the analogue of a charge-to-mass ratio must
be estimated throughout the spacetime region rather than merely evaluated
on the initial hypersurfaces.

Third, the incoming initial hypersurface is not assumed to be Minkowskian.
The quantities \(m\), \(Q\), \(\Omega\), \(h\), and \(\underline h\) may
all be nontrivial on \(\underline C\). The proof must therefore propagate
the initial subextremality condition, control the deformation of the
radial width, and quantify the accumulated magnetic error without using
an exact background solution. We now give an outline of the argument:

\vspace{3mm}
\noindent Let us assume, by way of contradiction, that $\mathcal{D}(0,v_1)\cup \mathcal{R}$ does not have a trapped surface. Recall that, initially, $\eta_0<1$. The idea of the proof is to impose a strong enough lower bound on $\frac{\text{d}\eta}{\text{d}u}$, with the ultimate goal of showing that $\eta(u_*)>1$. Once this bound is obtained, it follows that $\frac{2m_2}{r_2}(u_*)\geq \frac{2(m_2-m_1)}{r_2}(u_*)= \eta(u_*)>1$, whence $r_2(u_*)<2m_2(u_*)$ and hence $S(u_*,v_2)$ is a trapped sphere. Given that $(u_*,v_2)\in \mathcal{R}$, we obtain the desired contradiction.

\vspace{3mm}

\noindent In more specific terms, we aim to control $\frac{\text{d}\eta}{\text{d}x},$ where $x= \frac{r_2(u)}{r_2(u_0)}$ (this is equivalent to controlling $\frac{\text{d}\eta}{\text{d}u}$). We calculate:

\begin{align} \label{detadx} \frac{\text{d}\eta}{\text{d}x} =& \hsp \hsp \frac{\frac{\text{d}\eta}{\text{d}u}}{\frac{\text{d}x}{\text{d}u}}= \frac{r_2(u_0)}{\hba_2}\big(-\frac{2\hsp \hba_2}{r_2^2}(m_2-m_1) + \frac{2}{r_2}\partial_u (m_2-m_1)\big) \notag \\ =& -\frac{\eta}{x}+\frac{2}{x \hsp \hba_2} \bigg(\frac{\hba_2 \hsp Q_2^2}{2r_2^2}-\hba_2\hsp (\partial_u w_2)^2 - \frac{\hba_1 \hsp Q_1^2}{2\hsp r_1^2}+h_1(\partial_u w_1)^2\bigg) \notag \\ \leq& -\frac{\eta}{x}-\frac{2\hsp h_2}{x \hsp \hba_2}\hsp \big((\partial_u w_2)^2 - \frac{h_1}{h_2}(\partial_u w_1)^2\big) + \frac{Q_2^2}{x\hsp r_2^2}. \end{align} 

Assuming that the principal part is the leading part here, we expect a naive (heuristic) bound of the form \[ \frac{\text{d}\eta}{\text{d}x}\lesssim -\frac{\eta}{x}.    \]What we will be able to show is that there exist functions $g(x),f(x)$ that are $o(\frac{1}{x})$, such that \be \label{etafg}  \frac{\text{d}\eta}{\text{d}x}+\eta \frac{g(x)}{x}-\frac{f(x)}{x}\leq 0.    \ee This will give a strong enough bound on $\frac{\text{d}\eta}{\text{d}u}$ to infer the existence of a trapped surface. Let us give an overview of how the matter terms are controlled in \eqref{detadx}.

\begin{itemize}
\item The term $\frac{Q_2^2}{x r_2^2}$ is shown to be small enough compared to $\eta$, so that the entire term is absorbed in $\frac{\eta \hsp  g(x)}{x}$ (what $g$ signifies is explained above). 

\item Having obtained the above, we further show that \[(\partial_u w_2 - \partial_u w_1)^2 \leq \big(1 +\frac{\omega}{2}\big) \frac{- \partial_u r_2}{\partial_v r_2}(m_2-m_1)\bigg( \frac{1}{r_1}- \frac{1}{r_2}\bigg)(u),    \]for all $u \in [u_0, u_*]$ and that 

\[  \frac{\Omega_2^{-2}\partial_v r_2}{\Omega_1^{-2}\partial_v r_1}(u) \lesssim e^{-\eta(u)}.   \]What is important to emphasise is that the right-hand side terms in \eqref{detadx} involving matter are controlled above, through those bounds, in terms of the mass difference ratio $\eta$. Putting these together in \eqref{detadx} we arrive at the bound \eqref{etafg}, which in turn gives the desired contradiction as explained.
\end{itemize}

\vspace{3mm}
\noindent



\subsection{Relation to previous work}

The mathematical study of dynamical gravitational collapse in spherical
symmetry was initiated by Christodoulou for the Einstein equations coupled
to a massless scalar field. In a sequence of works
\cite{C91,C93,C94,C99}, he established trapped-surface formation,
analyzed the structure and instability of naked singularities, and proved
weak cosmic censorship for generic asymptotically flat data in the
spherically symmetric scalar-field model. The scalar field contributes to
the Raychaudhuri equations entirely through terms having a favorable
focusing sign.

Charged collapse introduces a competing repulsive mechanism. Quantitative
trapped-surface criteria for charged spherical systems were obtained, for
example, in the Einstein--Maxwell--scalar field setting in \cite{An}.
In the non-abelian problem, the symmetry-free work \cite{A-M-Y} proved semi-global
existence and trapped-surface formation for the Einstein--Yang--Mills
system by developing a gauge-invariant hierarchy of estimates for the
Yang--Mills curvature in a double-null foliation (see also \cite{A19,A17,AnThesis,AnAth,AthL, A-M-Y1, chen, LKR, M-Y1,  Kl-Rod} for trapped surface and MOTS formation results in pure vacuum and with matter models). In that construction,
the incoming initial hypersurface is taken to carry trivial geometric and
Yang--Mills data. The present analysis addresses a complementary problem:
it exploits spherical symmetry to obtain a sharp, explicitly computable
mass-concentration criterion while permitting nontrivial incoming data.

The local and continuation theories for the spherically symmetric system were
recently studied in \cite{Li_Wang}. A distinctive analytic feature of the
purely magnetic reduction is the singular nonlinear potential in
\eqref{eq:intro-YM-wave}, which prevents a direct application of the
\(L^\infty\)-based arguments used for the massless scalar field. The
\(H^1\) extension principle of \cite{Li_Wang} provides an appropriate
continuation framework for the broader program of analyzing weak cosmic
censorship for the spherically symmetric Einstein--Yang--Mills system.

Theorem \ref{thm:intro-main} supplies the collapse component of this
program: sufficiently concentrated Hawking mass produces a trapped
surface despite the dynamically evolving magnetic repulsion. It thereby
connects the classical existence theory of static colored black holes with
a quantitative mechanism by which trapped regions arise from regular
characteristic data.

\vspace{3mm}

\noindent The following is our Main Theorem of the paper:

\begin{theorem} \label{maintechnical}
Let us denote the initial incoming null hypersurface $v=v_1$ by $\Cbar$ and the outgoing null hypersurface $u=u_0$ by $C$.  Define 

\[  \varepsilon:= \sup_{C\cup \Cbar}\frac{Q^2}{r^2} <1.  \] Let $\omega$ be any positive constant in $(0,\frac{2}{3}).$ Choose $v_2-v_1$ sufficiently small such that 

\be \label{firstone}  \frac{64}{(1-\varepsilon)^2}(v_2-v_1)^2 +16(v_2-v_1)L \leq \frac{\omega}{4},  \ee 
and

\be \frac{3 (v_2-v_1)^2}{(1-\varepsilon)^2}+ 2\frac{v_2-v_1}{1-\varepsilon}L\leq \frac{3}{2}\omega, \label{second} \ee

\noindent where $L:= \sup_{(u,v) \in \Cbar} \frac{w^2(u,v)}{r(u,v)}$. Moreover, assume that the initial data along $\Cbar$ are \textit{magnetically subextremal}, i.e. 

\be \label{nonmag} m(u,v_1) \geq \lvert Q \rvert(u,v_1),    \ee for all $u$ such that $(u,v_1)\in \Cbar$. Denote \[ g_{\omega}(x) := \frac{1+\frac{\omega}{2}}{1-\frac{\omega}{2}} \frac{1}{(1+x)^2}\bigg( \big(\frac{2^{1-\frac{\omega}{2}}}{\omega}+\frac{1}{2^{1+\frac{\omega}{2}}(1+\frac{\omega}{2})}\big) x^{1-\frac{\omega}{2}}-\frac{2}{\omega}x - \frac{1}{1+\frac{\omega}{2}}x^2  \bigg).  \]Assume that the following lower bound on $\eta_0$ holds:

\be \eta_0 > \max \bigg\{ \frac{13\varepsilon}{\omega}+g_{\omega}(\delta_0), \hsp \frac{9}{2^{1+\frac{\omega}{2}}(1+\delta_0)^2}\delta_0^{1-\frac{\omega}{2}}+g_{\omega}(\delta_0)\bigg\},\ee where \[ \eta_0 := \frac{m(u_0,v_2)-m(u_0,v_1)}{r(u_0,v_2)}, \hspace{3mm} \delta_0 := \frac{r(u_0,v_2)-r(u_0,v_1)}{r(u_0,v_2)}.  \]We assume $0<\delta_0< \frac{1}{2}$. Then, a trapped surface is guaranteed to form in $[u_0, u_*]\times [v_1, v_2]\subset \mathcal{R}$. Here $u_*$ denotes the minimum value of $u$ for which $r(u,v_2)= \frac{3\delta_0}{1+\delta_0} \hsp r(u_0,v_2)$. 
\end{theorem}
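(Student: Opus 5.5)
We argue by contradiction, following the outline given in the introduction and the An--Lim singular characteristic scheme. Suppose that $[u_0,u_*]\times[v_1,v_2]$ contains no trapped sphere, so that $h>0$ and $\hba<0$ throughout, and hence $m$ is monotone in both null directions by \eqref{eq:intro-mass-v}--\eqref{eq:intro-mass-u}. Write $r_i(u)=r(u,v_i)$, $m_i(u)=m(u,v_i)$, $x=r_2(u)/r_2(u_0)$ and $\eta(u)=(m_2-m_1)/r_2$. We will derive a differential inequality of the form \eqref{etafg} in the variable $x\in[\frac{3\delta_0}{1+\delta_0},1]$. Integrating it from $x=1$ down to $x_*=\frac{3\delta_0}{1+\delta_0}$ should give $\eta(u_*)>1$, hence $2m_2/r_2>1$ at $(u_*,v_2)$, which contradicts the absence of trapping.

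\textbf{Step 1: bootstrap control on the rectangle.}
\begin{itemize}
\item Using \eqref{nonmag} together with the $u$-monotonicity of $m$ and the evolution of $Q$ via $\partial_vQ=2w\partial_vw$, we propagate a bound of the form $Q^2/r^2\lesssim\varepsilon$ (up to the $\omega/4$ loss allowed by \eqref{firstone}) into the rectangle. Integrating in $v$ over the short interval $v_2-v_1$ and using $L=\sup_{\Cbar}w^2/r$, the change in $w^2$ is small.
\item Integrating \eqref{eq:intro-cross-focusing} and \eqref{eq:intro-Omegaequation} in $v$ over the same short interval, with $1-\mu$ bounded below in terms of $1-\varepsilon$, we show that $\hba_2/\hba_1$ and $\Omega_2^2/\Omega_1^2$ stay within $1\pm\omega/4$. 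This is where \eqref{firstone}--\eqref{second} enter.
\item The same estimates give a lower bound $r_1\geq r_2-(r_2-r_1)(u_0)$-type control on the width, namely $r_2-r_1\leq$ (approximately) its initial value. This keeps $r_1/r_2$ comparable to $1-\delta_0/x$ and ensures $r_1>0$ down to $x_*$; the choice of $x_*$ is made precisely so that this holds.
\end{itemize}

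\textbf{Step 2: the two key matter estimates.} Start from \eqref{detadx}.
\begin{itemize}
\item The term $Q_2^2/(xr_2^2)$ is bounded by $\varepsilon/x$. The hypothesis $\eta_0>13\varepsilon/\omega+g_\omega$, propagated as a lower bound for $\eta$, lets us absorb it as $\eta\,\omega/(13x)$-type contributions into $\eta g(x)/x$.
\item For the kinetic terms, we write $\partial_uw_2-\partial_uw_1=\int_{v_1}^{v_2}\partial_v\partial_uw\,dv$ and use \eqref{eq:intro-YM-wave} together with Cauchy--Schwarz against the weight $\partial_vr/r^2$. Via \eqref{eq:intro-mass-v}, $\int(-\hba)\Omega^{-2}(\partial_vw)^2\,dv\leq m_2-m_1$. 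This yields
\[(\partial_uw_2-\partial_uw_1)^2\leq\big(1+\tfrac{\omega}{2}\big)\frac{-\hba_2}{\partial_vr_2}(m_2-m_1)\Big(\frac{1}{r_1}-\frac{1}{r_2}\Big),\]
where the $\omega/2$ absorbs the $Q w/r^2$ source and the $\hba$ variation, by \eqref{second}.
\item Integrating \eqref{eq:intro-Raychaudhuri-v} gives $h_2/h_1\leq e^{-c\eta}$-type control, so that $h_1/h_2\geq1$.
\item Combining $(\partial_uw_2)^2-\frac{h_1}{h_2}(\partial_uw_1)^2$ through the elementary inequality $a^2-\lambda b^2\geq-\frac{\lambda}{\lambda-1}(a-b)^2$ then expresses the kinetic error as $\eta\cdot\frac{(1+\omega/2)\,r_2(u_0)\,x}{r_1}\big(\frac1{r_1}-\frac1{r_2}\big)$-type terms, which are explicit functions of $x$ and $\delta_0$.
\end{itemize}

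\textbf{Step 3: the ODE comparison.} This produces \eqref{etafg} with explicit $g,f$. Solving the linear inequality with an integrating factor gives $\eta(x)\geq$ (integrating factor)$^{-1}\big(\eta_0-\int f\big)$. The function $g_\omega(\delta_0)$ is exactly the value of $\int f$ at $x_*$, whose exponents $1\pm\omega/2$ and factors $(1+x)^{-2}$ come from the integrating factor. The second term in the threshold ensures $\eta(x_*)>1$.

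\textbf{Main obstacle.} The hardest step is the Cauchy--Schwarz bound on $\partial_uw_2-\partial_uw_1$ with the sharp constant $1+\omega/2$, because the magnetic source $\Omega^2Qw/r^2$ and the non-Minkowskian incoming data must be shown to cost only $\omega/2$ uniformly down to $r_1$ small. We would first attempt this by splitting off the source term, bounding it by $L$ and $(v_2-v_1)$, and then closing the estimate using \eqref{firstone}.
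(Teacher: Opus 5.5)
\textbf{The decisive gap is your width control in Step 1.} You try to get $\hb_2/\hb_1$ and $\Omega_2^2/\Omega_1^2$ within $1\pm\frac{\omega}{4}$ by integrating \eqref{eq:intro-cross-focusing} and \eqref{eq:intro-Omegaequation} over $[v_1,v_2]$ ``with $1-\mu$ bounded below in terms of $1-\varepsilon$''. No such bound exists. The quantity $1-\mu=-h\hb$ tends to $0$ as trapping is approached, which is exactly the regime of the theorem. What is bounded below by $1-\varepsilon$ is $1-Q^2/r^2$ in $\partial_v(r\hb)=-\Omega^2(1-Q^2/r^2)$. Via the axis condition on $C$ and \eqref{eq:intro-Raychaudhuri-u}, this yields only the one-sided bound $-\Omega^{-2}\hb\geq\frac{1-\varepsilon}{2}$. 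Moreover, $\Omega$ is normalized only on $C\cup\Cbar$, $1/r$ is large near $x_*$, and \eqref{firstone}--\eqref{second} control integrals of $Q$ and $w^2/r$, not $\Omega$ or $\hb$.

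Even granting a two-sided bound, integrating $\partial_u(r_2-r_1)\leq\frac{\omega}{4}|\hb_1|$ gives only $(r_2-r_1)(u)\leq\delta_0r_2(u_0)+\frac{\omega}{4}r_1(u_0)$. Admissible data may have $\delta_0\ll\omega$: for $\omega=\frac12$, $\delta_0=10^{-2}$ the threshold is about $0.24<\frac12$. In that case the bound is too weak to keep $r_1>0$, let alone $\delta\leq\frac12$, down to $r_2(u_*)=\frac{3\delta_0}{1+\delta_0}r_2(u_0)$. It also cannot produce the explicit $g_\omega(\delta_0)$.

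The missing idea is the \emph{exact} sign $\partial_u\partial_v r\leq 0$, which is where magnetic subextremality is really used. From $\partial_vQ=2w\partial_vw$, Cauchy--Schwarz, the flux bound $\int_{v_1}^{v_a}(\partial_vw)^2\,dv\leq\frac{2}{1-\varepsilon}(m_a-m_1)$ and \eqref{firstone}, one gets $\frac{Q_a^2}{r_a^2}\leq\frac{\omega}{4}\eta_a+\frac{2Q_1^2}{r_a^2}$ with $\eta_a=2(m_a-m_1)/r_a$. Since $\frac{|Q_1|}{r_1}\leq\frac{m_1}{r_1}\leq\frac12$, you have $m_1\geq Q_1^2/r_1$. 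Hence $Q^2/r^2\leq 2m/r$ throughout the rectangle, and $\partial_u\partial_vr=-\frac{\Omega^2}{r}\bigl(\frac{2m}{r}-\frac{Q^2}{r^2}\bigr)\leq0$. This gives exactly $r_2-r_1\leq(r_2-r_1)(u_0)$, so $\delta(u)\leq\frac{\delta_0}{x(1+\delta_0)-\delta_0}$; this is closed by continuity down to $x_*$, where the right side equals $\frac12$. It also gives $\partial_u r_2\leq\partial_u r$ along each $v$-segment. No closeness of $\hb_2/\hb_1$ is ever needed, because $h_2/\hb_2$ cancels between \eqref{detadx} and the $\Theta$ bound.

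\textbf{Your account of the $\Theta$ estimate also rests on the wrong structure.} Equation \eqref{eq:intro-YM-wave} has no first-order term, so $\Theta=-\int_{v_1}^{v_2}\Omega^2\frac{Qw}{r^2}\,dv$ is \emph{entirely} magnetic. There is no scalar-field-type main term for $\frac{\omega}{2}$ to perturb. Splitting off the source and bounding it via $L$ and $v_2-v_1$ leaves an error not proportional to $m_2-m_1$, which cannot be absorbed into $\eta g/x$. What works is Cauchy--Schwarz on the factors $\frac{Q}{r}\frac{\Omega^2}{\sqrt r}$ and $\frac{w}{\sqrt r}$. For the first factor, use $\frac{Q^2}{r^2}\leq\frac{\omega}{2}\eta_a$ (under the bootstrap lower bound on $\eta$) and $\Omega^4\,dv=\frac{1}{h}\cdot\frac{-\partial_ur}{-\Omega^{-2}\partial_ur}\,dr$, together with $h\geq h_2$, $-\partial_ur\leq-\partial_ur_2$ (the monotonicity again) and $-\Omega^{-2}\partial_ur\geq\frac{1-\varepsilon}{2}$. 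For the second factor, \eqref{second} gives $\int_{v_1}^{v_2}\frac{w^2}{r}\,dv\leq\frac32\omega(1-\varepsilon)$. The product is $3\omega^2(\cdots)\leq(1+\frac{\omega}{2})(\cdots)$, so the energy flux enters only through the $Q$-bound.

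Two smaller points:
\begin{itemize}
\item $Q_2^2/r_2^2$ is not bounded by $\varepsilon$ in the interior, only by $\frac{\omega}{4}\eta+2\varepsilon$. So the lower bound $\eta\geq\frac{12\varepsilon}{\omega}$ must be closed by a genuine continuity argument; this is the role of $\frac{13\varepsilon}{\omega}$ in the threshold.
\item Your elementary inequality points the wrong way. What is needed, and true for $\lambda>1$, is $a^2-\lambda b^2\leq\frac{\lambda}{\lambda-1}(a-b)^2$. The inequality you stated, $a^2-\lambda b^2\geq-\frac{\lambda}{\lambda-1}(a-b)^2$, fails for $a=0$, $b=1$, $\lambda=3$.
\end{itemize}
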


\section{Proof of the Main Theorem}





This section contains the proof of the main theorem. In Section \ref{sectionQ} we estimate the term $\frac{Q_2^2}{xr_2^2}$ from \eqref{detadx} and bound it above in terms of $\eta, \omega$ and the initial data. To achieve this, we shall require a priori bounds on $\partial_u r$ and a crucial monotonicity formula (which is more difficult to prove than in the massless scalar field case, given that the matter field opposes the establishment of monotonicity, but still similar in spirit to \cite{AnLim}). In Section \ref{sectionw} we explain how to bound the terms involving $h$ and $\partial_u w$ on the right-hand side of \eqref{detadx} in terms of the Hawking mass and $r$. Together with the results of Section \ref{sectionQ}, this will ultimately allow us to control the entire right-hand side of \eqref{detadx} in terms of $\eta$, as discussed. 

\subsection{Estimates on $Q$,$r$} \label{sectionQ} In this section we will bound $Q$ in terms of $\eta, \omega$ and the initial data (we note here that $\varepsilon$, which will also be involved in the bounds, is an initial-data quantity). \noindent We begin our proof with a series of lemmata:

\begin{lemma}
Along $v=v_1$, the magnetic subextremality condition \eqref{nonmag} implies \be \notag \frac{m}{r}(u,v_1)\geq \frac{Q^2}{r^2}(u,v_1). \ee
\end{lemma}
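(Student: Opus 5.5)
The plan is to reduce the claim to a pointwise algebraic inequality on $\Cbar$, using two facts: the magnetic subextremality hypothesis \eqref{nonmag}, and the smallness assumption $\varepsilon<1$ from Theorem \ref{maintechnical}. Fix $u$ with $(u,v_1)\in\Cbar$ and $r(u,v_1)>0$; the axis point, where $r=0$, is excluded since neither side of the inequality is defined there. Multiplying through by $r^2>0$, the statement is equivalent to
\[
Q^2(u,v_1)\leq m(u,v_1)\, r(u,v_1).
\]

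First, bound $|Q|$ by $r$. By definition of $\varepsilon$ as a supremum over $C\cup\Cbar$, we have $\frac{Q^2}{r^2}(u,v_1)\leq\varepsilon<1$. Hence $|Q|(u,v_1)< r(u,v_1)$, and in particular $|Q|\leq r$.

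Second, split $Q^2=|Q|\cdot|Q|$. Bound one factor by $r$ (previous step) and the other by $m$ using \eqref{nonmag}; both factors are nonnegative. This gives
\[
Q^2=|Q|\,|Q|\leq |Q|\, r\leq m\, r .
\]
The last step uses $r>0$ together with $|Q|\leq m$. Note that \eqref{nonmag} forces $m\geq 0$ along $\Cbar$, so no sign issues arise. Dividing by $r^2$ yields $\frac{m}{r}\geq\frac{Q^2}{r^2}$ at $(u,v_1)$, and since $u$ was arbitrary the claim holds along all of $v=v_1$.

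There is no genuine obstacle here. The only point requiring care is remembering that the bound $|Q|\leq r$ is not part of \eqref{nonmag} itself but comes from the standing hypothesis $\varepsilon<1$. One should also handle the axis separately: there $r=m=Q=0$, so the statement is either vacuous or is understood as a limit.
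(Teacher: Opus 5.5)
Your argument is correct and is essentially the paper's: both write $\frac{Q^2}{r^2}=\frac{|Q|}{r}\cdot\frac{|Q|}{r}$, bound one factor by $\frac{m}{r}$ using \eqref{nonmag}, and bound the other by $1$. The only difference is where $|Q|\le r$ comes from: you take it from the standing hypothesis $\varepsilon<1$, whereas the paper combines \eqref{nonmag} with the absence of trapped spheres on $\Cbar$ (so $\frac{2m}{r}\le 1$) to get the stronger bound $\frac{|Q|}{r}\le\frac{m}{r}\le\frac12$; either hypothesis suffices.
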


\begin{proof}
   Recall that we are assuming the no-trapped-surface condition on $\Cbar$. Consequently, for all $u: (u,v_1) \in \Cbar$, there holds

   \[  \frac{2m}{r}(u,v_1)\leq 1.    \]Employing the magnetic subextremality condition, we have \[ \frac{\lvert Q \rvert}{r}(u,v_1)\leq \frac{m}{r}(u,v_1)\leq \frac{1}{2}.   \]Therefore, \[ \big( \frac{m}{r}-\frac{Q^2}{r^2}\big)(u,v_1) \geq \frac{\lvert Q\rvert}{r}(u,v_1) - \frac{Q^2
   }{r^2}(u,v_1) \geq \frac{\lvert Q \rvert}{r}\big( 1 - \frac{\lvert Q \rvert}{r}\big)(u,v_1) \geq 0.    \]In the last inequality we used $\frac{\lvert Q \rvert}{r}\leq \frac{1}{2}$.
\end{proof}

\begin{lemma}
There holds $\partial_u r \leq -\frac{1-\varepsilon}{2}\Omega^2$, everywhere in $\mathcal{D}(0,v_1)\cup\big( [u_0,0]\times [v_1,\infty]\big)$.
\end{lemma}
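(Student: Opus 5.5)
The plan is to first establish the bound on the initial outgoing hypersurface $C=\{u=u_0\}$, starting from the axis. I will then propagate it in the incoming direction using the $u$-Raychaudhuri equation \eqref{eq:intro-Raychaudhuri-u}. That equation says $\partial_u(\Omega^{-2}\partial_u r)\le 0$, so $\Omega^{-2}\partial_u r$ is nonincreasing along each incoming ray $\{v=\text{const}\}$. Every point of $\mathcal{D}(0,v_1)\cup([u_0,0]\times[v_1,\infty))$ lies on an incoming null ray that starts on $C$. Hence it suffices to show $\Omega^{-2}\partial_u r\le -\frac{1-\varepsilon}{2}$ on $C$, where $\Omega^2=\frac14$; this bound then holds everywhere in the region, which is the claim. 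Here $\varepsilon$ is understood as bounding $Q^2/r^2$ on the whole of $C$, including the part from the axis up to $v_1$.

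\emph{Step 1 (axis values).} At the axis point $(u_0,0)$ we have the compatibility condition $\partial_v r=-\partial_u r$ and $m=0$. Hence $1-\mu=-\Omega^{-2}\partial_u r\,\partial_v r=1$. With $\Omega^2=\frac14$ this gives $\partial_v r=-\partial_u r=\frac12$, i.e.\ $h=\Omega^{-2}\partial_v r=2$ at the axis. By \eqref{eq:intro-Raychaudhuri-v}, $h$ is nonincreasing in $v$ along $C$. So $\partial_v r\le 2\Omega^2$ on $C$, and integrating from the axis, where $r=0$, gives $r(u_0,v)\le 2\int_0^v\Omega^2\,dv'$.

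\emph{Step 2 (integration along $C$).} Write $\nu=\partial_u r$ and $\lambda=\partial_v r$, and substitute $\mu=1+\Omega^{-2}\lambda\nu$ into \eqref{eq:intro-cross-focusing}. This gives
\[
\partial_v\nu=-\frac{\lambda\nu}{r}-\frac{\Omega^2}{r}\Bigl(1-\frac{Q^2}{r^2}\Bigr),
\qquad\text{that is}\qquad
\partial_v(r\nu)=-\Omega^2\Bigl(1-\frac{Q^2}{r^2}\Bigr)\le-(1-\varepsilon)\Omega^2 .
\]
Since $r\nu$ vanishes at the axis, integration yields $r\nu\le-(1-\varepsilon)\int_0^v\Omega^2\le-\frac{1-\varepsilon}{2}\,r$ by Step 1. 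Therefore $\partial_u r\le-\frac{1-\varepsilon}{2}$ on $C$. Equivalently, $\Omega^{-2}\partial_u r\le-2(1-\varepsilon)\le-\frac{1-\varepsilon}{2}$ there, which is in fact stronger than needed.

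\emph{Step 3 (propagation).} By the monotonicity of $\Omega^{-2}\partial_u r$ in $u$, we obtain $\Omega^{-2}\partial_u r\le-\frac{1-\varepsilon}{2}$ throughout the region. This is the statement.

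The main obstacle is Step 1 together with the singular behaviour at the axis. The vanishing of $r\nu$ at $v=0$ and the limiting value $h=2$ require the regularity of the data at $\Gamma$, namely $r=m=Q=0$ there with $r$ and $\nu$ bounded. One also has to ensure that the bound $Q^2/r^2\le\varepsilon$ is available on the portion of $C$ between the axis and $v_1$, not only on $[v_1,v_2]$. I would handle the axis limit by a short local expansion: $r\nu\to0$ because $r\to0$ while $\nu$ stays bounded. I would then make explicit that the supremum defining $\varepsilon$ is taken over all of $C$.
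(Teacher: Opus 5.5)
Your argument is correct and is essentially the paper's proof. It uses the same identity $\partial_v(r\partial_u r)=-\Omega^2(1-Q^2/r^2)$ integrated from the axis along $C$, the same bound $r(u_0,v)\le v/2$ from the monotonicity of $h$ (the paper phrases this as concavity of $r(u_0,\cdot)$), and the same propagation via \eqref{eq:intro-Raychaudhuri-u}. One small correction concerns Step 1: $m=0$ on $\Gamma$ does not by itself give $\mu=0$ there, because $m=\frac r2(1+h\hb)$ vanishes automatically at $r=0$. However, you only use $\partial_v r(u_0,0)\le\frac12$. The paper gets this from the lower bound $\partial_v(r\partial_u r)\ge-\Omega^2$, which gives $\partial_u r\ge -v/(4r)$ on $C$, by letting $v\to0$ and using $\partial_u r=-\partial_v r$ at $(u_0,0)$.
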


\begin{proof}
Given equation \eqref{eq:intro-cross-focusing}, we have \[  \partial_{v}\hba =- \Omega^2 \big(\frac{2m}{r^2}-\frac{Q^2}{r^3}\big).   \] There holds

\begin{align} \partial_{v}(r \hba) = \partial_{v}(r \partial_u r) \notag =& -\Omega^2\big(\frac{2m}{r^2}-\frac{Q^2}{r^3}\big) r + \Omega^2 h \hsp \hba \\ =& \Omega^2 \big(-\frac{2m}{r}+\frac{Q^2}{r^2}\big) + \Omega^2 \big(\frac{2m}{r}-1\big) = -\Omega^2 \big(1-\frac{Q^2}{r^2}\big).   \end{align} Using the assumption that $\varepsilon <1 $ and $\Omega^2 =\frac{1}{4}$ on $C$, we arrive at \be -\frac{1}{4}\leq \partial_{v}(r \partial_u r)(u_0,v) < -\frac{1}{4}+\frac{\varepsilon}{4}, \ee for all $v: (u_0,v) \in C$. An integration on both sides yields

\be -\frac{v}{4r(u_0,v)} \leq \partial_u r(u_0,v)\leq \frac{-(1-\varepsilon)v}{4r(u_0,v)}.  \label{useful}      \ee For the first inequality in \eqref{useful}, evaluating at $v=0$ and using  gives

\be -\frac{1}{4 \partial_{v}r(u_0,0)}\leq \partial_u r(u_0,0)= -\partial_{v}r(u_0,0) \Rightarrow \partial_{\ubar}r(u_0,0)\leq \frac{1}{2}. \ee Taking into account that $\Omega^2=\frac{1}{4}$ on $C$, equation \eqref{eq:intro-Raychaudhuri-v} implies \be \partial_{v}\partial_{v} r(u_0, \cdot) \leq 0,\ee meaning that $r(u_0,\cdot)$ is concave. This gives that

\be  \frac{r}{v}(u_0,v) \leq \partial_{v}r(u_0,0). \ee Here we have used the fact that $r(u_0,0)= v(u_0,0)=0$.  Substituting into the second inequality of \eqref{useful} yields 
 \be \partial_u r(u_0,v) \leq -\frac{1-\varepsilon}{2}. \ee Finally, by equation \eqref{eq:intro-Raychaudhuri-u}, the quantity $\Omega^{-2}\partial_u r$ is decreasing along incoming null geodesics. Hence for a general point in $\mathcal{D}(0,v_1)\cup\big([u_0,0]\times [v_1,\infty)\big)$, we have \[ \Omega^{-2}\partial_u r \leq - \frac{1-\varepsilon}{2}.   \]

\end{proof}

\begin{proposition}\label{propositionforQ}
Fix $0< \omega < \frac{2}{3}.$ Choose $v_2-v_1$ sufficiently small satisfying \eqref{conditionv}. Let $v_1< v_a \leq v_2$. Assume that $\frac{r_2(u)}{r_1(u)}\leq \frac{3}{2}$ , i.e. that $\delta(u) \leq \frac{1}{2}$ for $u \in [u_0,0]$ and that $\mathcal{R} := [u_0,0]\times [v_1,v_2]$ is free of trapped surfaces. Then the following inequality holds:

\[ \frac{Q_a^2(u)}{r_a^2(u)} \leq \frac{\omega}{4}\eta_a(u)+\frac{2 Q_1^2(u)}{r_1^2(u)}.     \]Here,  \[ \eta_a(u):= \frac{2\big(m_a(u)-m_1(u)\big)}{r_a} .  \]The subscript a indicates a quantity evaluated at the point $(u,v_a)$.
\end{proposition}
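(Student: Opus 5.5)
We compare $Q_a$ with $Q_1$ along the outgoing segment $\{u\}\times[v_1,v_a]$. Write
\[
Q_a - Q_1=\int_{v_1}^{v_a}2w\,\partial_v w \,\mathrm dv ,
\]
so that
\[
Q_a^2\le 2Q_1^2+2\Big(\int_{v_1}^{v_a}2w\,\partial_v w\,\mathrm dv\Big)^2 .
\]
Dividing by $r_a^2$ and using $r_a\ge r_1$ (since $\partial_v r>0$ in the untrapped region $\mathcal R$) gives
\[
\frac{Q_a^2}{r_a^2}\le \frac{2Q_1^2}{r_1^2}+\frac{2}{r_a^2}\Big(\int_{v_1}^{v_a}2|w||\partial_v w|\,\mathrm dv\Big)^2 .
\]
It then suffices to bound the last term by $\frac{\omega}{4}\eta_a$.

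\textbf{Bounding the error term.} Apply Cauchy--Schwarz with weights adapted to the mass flux \eqref{eq:intro-mass-v}:
\[
\Big(\int |w||\partial_v w|\,\mathrm dv\Big)^2\le \int \frac{\Omega^2 w^2}{-\hba}\,\mathrm dv\;\int \frac{-\hba}{\Omega^2}(\partial_v w)^2\,\mathrm dv .
\]
Since $\hba<0$, $h>0$ and $Q^2\ge0$ in $\mathcal R$, the second factor is at most $m_a-m_1 = \tfrac{1}{2}r_a\eta_a$ by \eqref{eq:intro-mass-v}. The task therefore reduces to showing
\[
\frac{4}{r_a}\int_{v_1}^{v_a}\frac{\Omega^2w^2}{-\hba}\,\mathrm dv\le \frac{\omega}{4}.
\]

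\textbf{Controlling the weight integral.} For the factor $\Omega^2/(-\hba)$ I would use the previous lemma, which gives $-\Omega^{-2}\hba\ge\frac{1-\varepsilon}{2}$, so this factor is at most $\frac{2}{1-\varepsilon}$. It remains to control $w^2/r_a$ on the segment, in terms of $L=\sup_{\Cbar}w^2/r$ and the short length $v_2-v_1$. This is where the structure of \eqref{firstone} should come from. Write $w(u,v)=w(u,v_1)+\int_{v_1}^{v}\partial_v w$ and square. The data term gives $w^2(u,v_1)\le L r_1\le L r_a$, which produces the $16(v_2-v_1)L$ piece. The integral term is again bounded by Cauchy--Schwarz against the mass flux, which yields a contribution of order $(v_2-v_1)^2/(1-\varepsilon)^2$ times a ratio such as $\eta_a\le 1$; the hypothesis $\delta\le\tfrac12$, i.e.\ $r_a\le\tfrac32 r_1$, absorbs the constants. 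The no-trapping condition gives $2m/r\le1$, hence $\eta_a\le1$, which is used to close.

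\textbf{Main obstacle.} The main obstacle is that the bound for $w$ inside the integral again involves $\int(\partial_v w)^2$ weighted by $\Omega^{-2}$ and $\hba$. These must be converted uniformly into mass differences while keeping the factors $\Omega^2$ in check (recall $\Omega^2=\tfrac14$ only on the initial cones). I would handle this with the same lemma bound on $\Omega^{-2}\hba$, together with $r\ge r_1$ on the segment. I would then verify that the resulting numerical constants match $\frac{64}{(1-\varepsilon)^2}(v_2-v_1)^2+16(v_2-v_1)L\le\frac{\omega}{4}$.
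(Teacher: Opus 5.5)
Your proposal is essentially the paper's proof. It uses the same splitting $Q_a^2\le 2Q_1^2+2\bigl(\int_{v_1}^{v_a}2w\,\partial_v w\,\mathrm{d}v\bigr)^2$, the same Cauchy--Schwarz step, and the same conversion of $(\partial_v w)^2$ into $\partial_v m$ through \eqref{eq:intro-mass-v} and the lemma bound $-\Omega^{-2}\underline{h}\ge\frac{1-\varepsilon}{2}$. It then uses the same estimate $w^2\le 2w_1^2+2(v-v_1)\int_{v_1}^{v}(\partial_v w)^2$ together with $w_1^2\le Lr_1\le Lr_a$ and $\eta_a\le 1$. Placing the weight $-\Omega^{-2}\underline{h}$ inside the Cauchy--Schwarz, as you do, is purely cosmetic, and the hypothesis $\delta\le\frac12$ is not actually used.

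One correction to your final constant check: the data term gives $\frac{16}{1-\varepsilon}(v_2-v_1)L$, not $16(v_2-v_1)L$. So \eqref{firstone} as written does not in general cover the $L$-piece when $\varepsilon>0$. The paper's own computation silently drops the same factor $\frac{1}{1-\varepsilon}$, and the honest fix is to require $\frac{64}{(1-\varepsilon)^2}(v_2-v_1)^2+\frac{16}{1-\varepsilon}(v_2-v_1)L\le\frac{\omega}{4}$ instead.
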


\begin{proof}
We write

\begin{align} \notag
Q_a^2 =&\bigg(\int_{v_1}^{v_a}\partial_{v}Q \hspace{1mm} \text{d}v +Q_1\bigg)^2 \leq 2 \bigg(\inta \partial_v Q \hspace{1mm} \text{d}v\bigg)^2 +2Q_1^2 \\ \leq& 2\bigg(\inta 2 w \hsp \partial_v w \dv \bigg)^2 + 2Q_1^2  \leq 8  \inta w^2 \dv \cdot \inta (\partial_v w)^2 \dv + 2 Q_1^2.   \label{Qbound}
\end{align}The second integral can be bounded as follows:

\begin{align}
\inta (\partial_v w)^2 \dv =& \inta \frac{-\Omega^{-2}\hb (\partial_v w)^2}{-\Omega^{-2}\hb} \dv \notag \\ \leq& \frac{2}{1-\varepsilon}\inta \partial_v m - \frac{\Omega^2 h Q^2}{2r^2} \leq \frac{2(m_a-m_1)}{1-\varepsilon}.  \label{di}
\end{align}

For the first integral, we have

\begin{align}
\inta w^2 \dv = &\inta\big(\inta \partial_{v^{\prime}}w\hspace{1mm} \text{d}v^{\prime} +w_1\big)^2 \dv \notag \\ \leq& 2 \inta \bigg[ \big(\inta \partial_{v^{\prime}}w \dvprime\big)^2 +w_1^2 \bigg] \dv \notag \\ \leq& 2 \inta \bigg[ (v-v_1) \inta (\partial_{v^{\prime}}w)^2 \dvprime +w_1^2 \bigg] \dv \notag\\ \leq& 2 \inta \bigg[ (v-v_1) \inta \frac{-\Omega^{-2}\hb (\partial_{v^{\prime}}w)^2}{-\Omega^{-2}\hb} \dvprime +w_1^2 \bigg] \dv \notag\\ \leq& 2(v_a-v_1) \cdot \frac{2}{1-\varepsilon}\inta \int_{v_1}^v \partial_{v^{\prime}}m \hspace{1mm} \text{d}v^{\prime}\dv +2(v_a-v_1) \inta w_1^2 \dv \notag \\ \leq& \frac{4}{1-\varepsilon}(v_a-v_1)^2 (m_a-m_1) +2(v_a-v_1) w_1^2. \label{fi}
\end{align}
Therefore, putting \eqref{Qbound}-\eqref{fi} together, we have 

\begin{align}
&Q_a^2- 2Q_1^2 \notag \\ \leq& 8\cdot \frac{2(m_a-m_1)}{1-\varepsilon}\cdot \big( \frac{4}{1-\varepsilon}(v_a-v_1)^2(m_a-m_1)^2 + 16 (v_a-v_1)(m_a-m_1)w_1^2\big).
\end{align}
Dividing by $r_a^2$, we have:

\begin{align}
\frac{Q_a^2}{r_a^2} \leq& \frac{64}{(1-\varepsilon)^2}(v_a-v_1)^2 \eta_a^2 +16(v_a-v_1)\eta_a \frac{w_1^2}{r_a} + \frac{2Q_1^2}{r_1^2} \\ \leq& \bigg(\frac{64}{(1-\varepsilon)^2}(v_a-v_1)^2 +16(v_a-v_1)\frac{w_1^2}{r_1}\bigg)\eta_a + \frac{2Q_1^2}{r_1^2}. 
\end{align}Here we have used the fact that $\eta_a = \frac{2(m_a-m_1)}{r_a} \leq \frac{2m_a}{r_a}\leq 1$. By denoting $L:=\sup_{\Cbar} \frac{w_1^2}{r_1}$ and choosing $v_2-v_1$ sufficiently small such that 

\be \label{conditionv}  \frac{64}{(1-\varepsilon)^2}(v_a-v_1)^2 +16(v_a-v_1)L \leq \frac{\omega}{4},       \ee we arrive at the desired result.

\end{proof}
\noindent We are ready to obtain a key monotonicity formula, in the form of the following Proposition:

\begin{proposition} \label{propmonotonicity}
Assume that the initial data along $\Cbar$ is magnetically subextremal and that $\mathcal{R}$ is free of trapped surfaces. Then $\partial_u\partial_v r \leq 0$ in $[u_0, u_*] \times [v_1,v_2]$ and $\delta(u):= \frac{r_2(u)}{r_1(u)}-1 \leq \frac{1}{2}$ for $u \in [u_0, u_*],$ where $u_*$ is defined so that $x(u_*)= \frac{3\delta_0}{1+\delta_0}.$ 
\end{proposition}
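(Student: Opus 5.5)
The plan is a continuity (bootstrap) argument in $u$, where the two conclusions feed each other. The sign of $\partial_u\partial_v r$ controls the evolution of $r_2-r_1$, hence of the width. The width bound is exactly what Proposition \ref{propositionforQ} needs, and that proposition in turn controls the sign of $\partial_u\partial_v r$.

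\emph{Reduction of the sign condition.} By \eqref{eq:intro-cross-focusing},
\[
\partial_u\partial_v r=\partial_v\hb=-\frac{\Omega^2}{r}\Big(\frac{2m}{r}-\frac{Q^2}{r^2}\Big).
\]
So it suffices to show $\frac{2m}{r}\geq \frac{Q^2}{r^2}$ at every point $(u,v_a)$ of $[u_0,u_*]\times[v_1,v_2]$.

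First I strengthen the preceding Lemma on $\Cbar$. There $\frac{|Q|}{r}\le \frac{m}{r}\le\frac12$, so $\frac{Q^2}{r^2}\le \frac12\frac{|Q|}{r}\le \frac12\frac{m}{r}$, which gives
\[
\frac{m_1}{r_1}\ge \frac{2Q_1^2}{r_1^2}.
\]
Next I write $\frac{2m_a}{r_a}=\eta_a+\frac{2m_1}{r_a}$, with $\eta_a$ as in Proposition \ref{propositionforQ}. That proposition gives
\[
\frac{Q_a^2}{r_a^2}\le \frac{\omega}{4}\eta_a+\frac{2Q_1^2}{r_1^2},
\]
and it applies as long as the width stays controlled and $\mathcal R$ has no trapped surfaces.

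Because $\omega/4<1$ and $\eta_a\ge0$ (the mass is monotone in $v$ in the untrapped region), the $\eta_a$ term is absorbed. It then remains to check $\frac{2m_1}{r_a}\ge \frac{2Q_1^2}{r_1^2}$. Since $r_1\le r_a\le r_2$ and $r_1/r_2\ge \frac12$ under the width bootstrap, we get $\frac{2m_1}{r_a}\ge \frac{m_1}{r_1}\ge \frac{2Q_1^2}{r_1^2}$, which is enough. Hence $\partial_u\partial_v r\le 0$ wherever the width bound holds.

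\emph{Width bound.} Integrating $\partial_u\partial_v r\le0$ in $v$ over $[v_1,v_2]$ shows that $\partial_u(r_2-r_1)\le0$. Therefore
\[
r_2(u)-r_1(u)\le r_2(u_0)-r_1(u_0)=\delta_0\, r_2(u_0).
\]
For $u\le u_*$ we have $r_2(u)\ge \frac{3\delta_0}{1+\delta_0}r_2(u_0)$, because $u_*$ is the first time this value is attained and $\partial_u r<0$. Combining the two,
\[
\frac{r_2-r_1}{r_2}(u)\le \frac{1+\delta_0}{3}<\frac12
\]
using $\delta_0<\frac12$. This yields the bound on $\delta(u)$ in the normalization used in Proposition \ref{propositionforQ}, and in particular $r_1/r_2\ge\frac12$ as used above.

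\emph{Closing the continuity argument.} Let $\bar u$ be the supremum of those $u\in[u_0,u_*]$ for which both conclusions hold on $[u_0,u]\times[v_1,v_2]$. At $u_0$, or near $u_0$, the width bound holds with strict inequality. The sign argument then gives $\partial_u\partial_v r\le0$, and the width argument improves the width bound strictly, so by continuity $\bar u=u_*$.

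I expect the main obstacle to be ensuring that Proposition \ref{propositionforQ}'s hypotheses are genuinely available inside the bootstrap, namely the width bound and $\eta_a\le1$, the latter coming from the no-trapping assumption. The other delicate point is the constant bookkeeping in the step $\frac{2m_1}{r_a}\ge\frac{2Q_1^2}{r_1^2}$. There the improved subextremality estimate $\frac{m_1}{r_1}\ge \frac{2Q_1^2}{r_1^2}$, rather than the weaker form stated in the Lemma, is what supplies the missing factor.
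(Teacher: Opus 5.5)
Your sign argument is sound. The strengthened bound $m_1\ge 2Q_1^2/r_1$ correctly compensates for the $2Q_1^2/r_1^2$ term in Proposition \ref{propositionforQ}, and it needs only $r_a\le 2r_1$. Your continuity structure also matches the paper's.

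\textbf{The gap is in the width conclusion.} The statement asks for $\delta(u)=\frac{r_2}{r_1}-1\le\frac12$, i.e. $r_2\le\frac32 r_1$. This is also the stated hypothesis of Proposition \ref{propositionforQ}. What you prove is $\frac{r_2-r_1}{r_2}(u)\le\frac{1+\delta_0}{3}<\frac12$, which gives only $r_2<2r_1$. So your sentence claiming this is ``the bound on $\delta(u)$ in the normalization used in Proposition \ref{propositionforQ}'' is false.

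Moreover, with $\delta_0$ read as $\frac{r_2-r_1}{r_2}(u_0)$ (as in Theorem \ref{maintechnical}), no sharpening is possible. The only information your width step extracts is $r_2-r_1\le r_2(u_0)-r_1(u_0)$ together with $r_2(u)\ge\frac{3\delta_0}{1+\delta_0}r_2(u_0)$. At best this gives $\delta(u_*)\le\frac{1+\delta_0}{2-\delta_0}$, which is strictly larger than $\frac12$ for every $\delta_0>0$, with equality if $r_2-r_1$ happens to stay constant in $u$. Hence the second claim of the proposition is not established, and Proposition \ref{propositionforQ} is invoked without verifying its stated hypothesis.

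\textbf{The fix is the normalization of $\delta_0$.} In this proposition $\delta_0$ must be $\delta(u_0)=\frac{r_2(u_0)}{r_1(u_0)}-1$, consistent with the definition of $\delta(u)$ in the statement. The paper's proof uses exactly $\frac{r_2(u_0)}{r_1(u_0)}=1+\delta_0$ to obtain $\delta(u)\le\frac{\delta_0}{x(u)(1+\delta_0)-\delta_0}$, which equals $\frac12$ precisely at $x=\frac{3\delta_0}{1+\delta_0}$. With that reading:
\begin{itemize}
\item $r_2(u_0)-r_1(u_0)=\delta_0 r_1(u_0)$;
\item on $[u_0,u_*]$, $r_2(u)\ge\frac{3\delta_0}{1+\delta_0}r_2(u_0)=3\delta_0 r_1(u_0)$;
\item your computation then gives $\frac{r_2-r_1}{r_2}(u)\le\frac13$, i.e. $\delta(u)\le\frac12$, strictly for $u<u_*$.
\end{itemize}
The bootstrap then closes for the correct quantity. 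The bound $r_a\le 2r_1$ used in your sign step follows a fortiori.
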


\begin{proof}
Let $x^{\prime} =: \inf \{ x \in [\frac{3\delta_0}{1+\delta_0},1] \mid \delta(y)\leq \hsp  \frac{1}{2}\hsp \text{holds} \hsp \text{for}\hspace{1mm} y\in [x,1]\}.$ We will aim to show that $x^{\prime} = \frac{3\delta_0}{1+\delta_0}.$ This proves the claim that $\delta(u)\leq \frac{1}{2}$ for all $u\in [u_0,u_*]$.

\vspace{3mm}

\noindent Proposition \eqref{propositionforQ} is applicable, since $\frac{r_2(x^{\prime})}{r_1(x^{\prime})}\leq \frac{3}{2}$. Thus, for all $x \in [x^{\prime},1]$ and $v_a \in [v_1,v_2]$, we have

\begin{align} \notag 
\frac{Q^2}{r^2}(x,v_a) \leq& \frac{\omega}{4}\eta_a + \frac{2Q_1^2}{r^2} \leq \eta_a + \frac{2Q_1^2}{r^2} \\ =& \frac{2m_a}{r}- \frac{2}{r}\big(m_1-\frac{Q_1^2}{r}\big) \notag \\ \leq& \frac{2m_a}{r}- \frac{2}{r}\big(m_1-\frac{Q_1^2}{r_1}\big)
\end{align}However, the magnetic sub-extremality assumption  \eqref{nonmag} implies $m_1- \frac{Q_1^2}{r_1} \geq 0$. Hence

\[ \frac{Q^2}{r^2}(x,v_a) \leq \frac{2m}{r}(x,v_a). \]
We now have 

\begin{align}
\partial_u \partial_v r = -\frac{\Omega^2}{r} \big(\frac{2m}{r}-\frac{Q^2}{r^2}\big).
\end{align}We thus have $\partial_u \partial_v r\leq 0$ in the region $[u_0, u^{\prime}]\times [v_1,v_2]$, where $u^{\prime}$ is defined by $x(u^{\prime})=x^{\prime}$. Integrating with respect to $u$ yields \[ \partial_v r(u)\leq \partial_v r(u_0).   \]Integrating this inequality with respect to $v$, we obtain 

\begin{align}
r_2(u)-r_1(u)\leq r_2(u_0)-r_1(u_0),
\end{align}for all $u \in [u_0,u^{\prime}]$. We can use this information to derive a bound for $\delta$:

\begin{align}\notag 
\delta(u)=& \frac{r_2(u)}{r_1(u)}-1 = \frac{r_2(u)-r_1(u)}{r_2(u)-(r_2(u)-r_1(u))} \leq \frac{r_2(u_0)-r_1(u_0)}{r_2(u)-(r_2(u_0)-r_1(u_0))} \notag \\ \leq& \frac{\delta_0}{\frac{r_2(u_0)}{r_1(u_0)}-\delta_0} = \frac{\delta_0}{x(u)(1+\delta_0) -\delta_0},
\end{align}for all $u \in [u_0, u^{\prime}]$. If $x^{\prime}>\frac{3 \delta_0}{1+\delta_0}$, we have $\delta(x^{\prime})< \frac{\delta_0}{3\delta_0-\delta_0}=\frac{1}{2}.$ By the continuity of $\delta$, there exists some $x^{\prime \prime}< x^{\prime}$ such that $\delta(x)< \frac{1}{2}$ for all $x \in [x^{\prime \prime},1]$, which is a contradiction to the infimum property of $x^{\prime}$. Therefore, there must hold $x^{\prime} = \frac{3\delta_0}{1+\delta_0}$.
\end{proof}

\noindent The above lead us to the following lemma:

\begin{lemma} \label{lemmaomegaeta}
Assume that the initial data along $\Cbar$ is magnetically subextremal and that $\mathcal{R}$ is free of trapped surfaces. Then for all $(u,v_a) \in [u_0, u_*] \times [v_1,v_2]$, we have the following estimate for the charge:

\[  \frac{Q_a^2}{r_a^2}\leq \frac{\omega}{4}\eta_a+2\varepsilon,  \]where we recall that $\varepsilon = \sup_{C \cup \Cbar}\frac{Q^2}{r^2} <1$. If, moreover, $\eta_a \geq \frac{8\varepsilon}{\omega},$ then $\frac{Q_a^2}{r_a^2}\leq \frac{\omega}{2}\eta_a$.
\end{lemma}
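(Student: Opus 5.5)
The proof will combine Proposition \ref{propmonotonicity} with Proposition \ref{propositionforQ}. First, I will check that the hypotheses of Proposition \ref{propositionforQ} hold on the whole range $u\in[u_0,u_*]$. Proposition \ref{propmonotonicity}, which uses magnetic subextremality along $\Cbar$ and the absence of trapped surfaces in $\mathcal{R}$, gives $\delta(u)\leq \frac12$, that is $\frac{r_2(u)}{r_1(u)}\leq \frac32$, for every $u\in[u_0,u_*]$. The smallness condition \eqref{conditionv} on $v_2-v_1$ is exactly \eqref{firstone}. It then holds for every $v_a\in(v_1,v_2]$, because its left-hand side is increasing in $v_a-v_1$. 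So Proposition \ref{propositionforQ} applies at every point $(u,v_a)$ of the rectangle. The case $v_a=v_1$ is trivial, since then $\frac{Q_a^2}{r_a^2}=\frac{Q_1^2}{r_1^2}\leq\varepsilon$ and $\eta_a=0$. This yields
\[ \frac{Q_a^2}{r_a^2}\leq \frac{\omega}{4}\eta_a + \frac{2Q_1^2}{r_1^2}(u). \]

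Next, I will bound $\frac{Q_1^2}{r_1^2}(u)=\frac{Q^2}{r^2}(u,v_1)$ by $\varepsilon$. This is immediate, because the point $(u,v_1)$ lies on the initial incoming hypersurface $\Cbar$, and $\varepsilon$ is by definition the supremum of $\frac{Q^2}{r^2}$ over $C\cup\Cbar$. There is a subtle point to watch: the proof of Proposition \ref{propositionforQ} ends with a term $\frac{2Q_1^2}{r_1^2}$, and I must make sure it is evaluated on $\Cbar$ rather than at an interior point. In that proof $Q_1$ comes from $Q(u,v_1)$, so it is indeed data on $\Cbar$. Also, the step from $\frac{1}{r_a}$ to $\frac{1}{r_1}$ is justified by $r_1\leq r_a$, which follows from $\partial_v r>0$ in the untrapped region. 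Combining these gives the first claim, $\frac{Q_a^2}{r_a^2}\leq \frac{\omega}{4}\eta_a+2\varepsilon$.

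For the second claim, I assume $\eta_a\geq \frac{8\varepsilon}{\omega}$. This is equivalent to $2\varepsilon\leq \frac{\omega}{4}\eta_a$. Substituting into the first bound gives $\frac{Q_a^2}{r_a^2}\leq \frac{\omega}{4}\eta_a+\frac{\omega}{4}\eta_a=\frac{\omega}{2}\eta_a$.

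The only real obstacle is to confirm that Proposition \ref{propmonotonicity} covers the full interval $[u_0,u_*]$, rather than just a subinterval, so that the hypothesis $\delta\leq\frac12$ of Proposition \ref{propositionforQ} is available everywhere. This is exactly what the continuity argument in Proposition \ref{propmonotonicity} establishes. Everything else is bookkeeping.
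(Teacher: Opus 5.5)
Your proposal is correct and follows the paper's proof essentially verbatim. Proposition~\ref{propmonotonicity} supplies $\delta(u)\leq\frac12$ on $[u_0,u_*]$ (the paper's ``$\delta(u)\leq\frac32$'' is a typo), Proposition~\ref{propositionforQ} then gives the first bound once $\frac{Q_1^2}{r_1^2}(u)\leq\varepsilon$ is used because $(u,v_1)\in\Cbar$, and the second claim follows by absorbing $2\varepsilon\leq\frac{\omega}{4}\eta_a$. Your extra bookkeeping (the case $v_a=v_1$, the monotonicity of the smallness condition in $v_a-v_1$, and $r_1\leq r_a$, which only needs $\partial_v r\geq 0$) makes explicit what the paper leaves implicit.
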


\begin{proof}
The hypothesis of this lemma satisfies that of Proposition \ref{propmonotonicity}, therefore $\delta(u) \leq \frac{3}{2}$ for all $u \in [u_0, u_*]$. Thus the hypothesis of Proposition \ref{propositionforQ} is satisfied. Applying Proposition \ref{propositionforQ} proves the first part. For the second part, notice that if $\eta_a \geq \frac{8\varepsilon}{\omega}$, we have

\[  \frac{Q_a^2}{r_a^2}\leq \frac{\omega}{4}\eta_a +2\varepsilon\leq \frac{\omega}{4}\eta_a + \frac{\omega}{4}\eta_a = \frac{\omega}{2}\eta_a.    \]
\end{proof}

\subsection{Estimates for $\partial_u w$.}\label{sectionw}
We now move to estimates involving a first derivative of the Yang-Mills parameter $w$. Our main estimate is presented in the form of the lemma below.
\begin{lemma} \label{Thetalemma}
Define $\Theta:= \partial_u w_2 - \partial_u w_1$. Suppose that the initial data along $\Cbar$ is not super-charged and that $\mathcal{D}(0,v_1)\cup \mathcal{R}$ is free of trapped surfaces. If $\eta \geq \frac{8 \varepsilon}{\omega}$, then 

\[  \Theta(u)^2 \leq \big(1 +\frac{\omega}{2}\big) \frac{- \partial_u r_2}{\partial_v r_2}(m_2-m_1)\bigg( \frac{1}{r_1}- \frac{1}{r_2}\bigg)(u),    \]for all $u \in [u_0, u_*]$.
\end{lemma}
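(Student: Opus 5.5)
We write $\Theta = \int_{v_1}^{v_2}\partial_v\partial_u w\,\mathrm{d}v$ and use the wave equation \eqref{eq:intro-YM-wave} to replace $\partial_v\partial_u w$ by $-\Omega^2 Qw/r^2$. This is the analogue of the An--Lim argument. There, $\partial_u\partial_v$ of the scalar field is expressed through $\partial_u r\,\partial_v\phi + \partial_v r\,\partial_u\phi$ over $r$. Here the structure is different: the source is a potential term rather than a derivative coupling. The plan is therefore to bound $\Theta^2$ by Cauchy--Schwarz with a weight adapted to the Hawking mass transport \eqref{eq:intro-mass-v}.

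The mass transport contains the term $\Omega^2 h\,Q^2/(2r^2)$, which is exactly the magnetic potential energy. We write
\[
\Theta^2=\Big(\int_{v_1}^{v_2}\Omega^2\frac{Q w}{r^2}\,\mathrm{d}v\Big)^2\le \int_{v_1}^{v_2}\Omega^2 h\frac{Q^2}{r^2}\,\mathrm{d}v\cdot\int_{v_1}^{v_2}\frac{\Omega^2 w^2}{h\,r^2}\,\mathrm{d}v .
\]
The first factor is at most $2(m_2-m_1)$ by \eqref{eq:intro-mass-v}, since both terms there are nonnegative in the untrapped region $\mathcal R$ ($h>0$, $\hb<0$). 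For the second factor we write $\Omega^2/h = \Omega^4/\partial_v r$, change variables $\mathrm{d}r=\partial_v r\,\mathrm{d}v$, and aim for
\[
\int \frac{\Omega^2 w^2}{h r^2}\,\mathrm{d}v\approx \frac{-\partial_u r_2}{\partial_v r_2}\int_{r_1}^{r_2}\frac{\mathrm{d}r}{r^2}\cdot(\text{factor}) .
\]
This requires comparing $\Omega^2/(\partial_v r)^2$ at interior points with the value at $v_2$, which we reformulate via $-\partial_u r/\partial_v r$.

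The comparison uses monotonicity in $v$ of the quantities involved. By Proposition \ref{propmonotonicity}, $\partial_u\partial_v r\le 0$, so $\partial_v\hb\le0$ and $-\hb$ increases in $v$. By \eqref{eq:intro-Raychaudhuri-v}, $h=\Omega^{-2}\partial_v r$ decreases in $v$. Hence $-\hb/h$ increases in $v$ and is maximized at $v_2$. The identity $\Omega^2 = -\hb h/(1-\mu)$ (from $1-\mu=-h\hb$, adjusting for the definitions) converts $\Omega^2/h^2$ into $(-\hb/h)/(1-\mu)$. This gives
\[
\int_{v_1}^{v_2}\frac{\Omega^2 w^2}{h r^2}\,\mathrm{d}v\le \frac{-\hb_2}{h_2}\sup\frac{w^2}{1-\mu}\int_{r_1}^{r_2}\frac{\mathrm{d}r}{r^2}.
\]
The factor $1/r_1-1/r_2$ then appears naturally, and the normalization on $\Omega$ converts $-\hb_2/h_2$ into $-\partial_u r_2/\partial_v r_2$ as in the statement.

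The main obstacle is the leftover factor $w^2/(1-\mu)$, which must be at most $\tfrac12(1+\tfrac{\omega}{2})$ to match the constant in the statement. We would split $w^2 = 1+Q$. The $Q$-part is then controlled by Lemma \ref{lemmaomegaeta}, using the hypothesis $\eta\ge 8\varepsilon/\omega$ together with the smallness conditions \eqref{firstone} and \eqref{second} on $v_2-v_1$ and $L$. These are presumably what produce the $\omega/2$ loss. If $1-\mu$ cannot be bounded from below this way, the fallback is to redistribute $Q$ between the two Cauchy--Schwarz factors, for instance by weighting with $|Q|/r$ instead of $Q^2/r^2$.
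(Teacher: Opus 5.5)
\textbf{There is a genuine gap.} Your Cauchy--Schwarz split and both factor bounds are valid, but what they deliver is
\[
\Theta^2\le 2\,\sup_{[v_1,v_2]}\frac{w^2}{1-\mu}\cdot\frac{-\underline{h}_2}{h_2}\,(m_2-m_1)\Big(\frac{1}{r_1}-\frac{1}{r_2}\Big).
\]
The leftover factor cannot be brought below $\tfrac12\big(1+\tfrac{\omega}{2}\big)<\tfrac23$. Since $m\ge0$ in $\mathcal{R}$, we have $w^2/(1-\mu)\ge w^2=1+Q$. This is close to $1$ whenever the magnetic aspect $Q$ is small, which is a perfectly admissible situation. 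Worse, nothing bounds $1-\mu=-h\underline{h}$ from below in the untrapped region, and the whole argument is about $\mu_2$ approaching $1$.

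The defect is structural, not a matter of constants. Converting the $w$-factor to an $r$-integral costs $\mathrm{d}v=\mathrm{d}r/(\Omega^2h)$ on top of the weight $\Omega^2/h$ you gave it. So you are left with $h^{-2}=(-\underline{h}/h)(1-\mu)^{-1}$ instead of the single power $-\underline{h}_2/h_2$ of the target. Meanwhile, bounding the magnetic energy by the full increment $2(m_2-m_1)$ uses no smallness, so neither $\eta\ge8\varepsilon/\omega$ nor the short-segment conditions ever act. Those conditions control $v$-integrals such as $\int w^2/r\,\mathrm{d}v$, not a pointwise ratio $w^2/(1-\mu)$. Redistributing $Q$ between the two factors does not move the $h$-weights either, so neither of your proposed repairs closes the gap.

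The paper splits the other way, $\Omega^2Qw/r^2=\big(\Omega^2Q\,r^{-3/2}\big)\big(w\,r^{-1/2}\big)$, and sends the $Q$-factor, not the $w$-factor, to $\mathrm{d}r$. By Lemma \ref{lemmaomegaeta} and $\eta\ge8\varepsilon/\omega$, one has $Q^2/r^2\le\frac{\omega}{4}\eta_a+2\varepsilon\le\omega(m_2-m_1)/r$. The extra $1/r$ here produces $\int_{r_1}^{r_2}r^{-2}\,\mathrm{d}r$. After the change of variables the weight becomes $\Omega^4/(r\,\partial_vr)=\Omega^2/(rh)$, with only one inverse power of $h$. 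It is controlled by $h\ge h_2$ and $\Omega^2\le\frac{2}{1-\varepsilon}(-\partial_ur)\le\frac{2}{1-\varepsilon}(-\partial_ur_2)$, using the lemma $\partial_ur\le-\frac{1-\varepsilon}{2}\Omega^2$ and $\partial_u\partial_vr\le0$. Thus $\Omega^2$ is absorbed through $-\partial_ur$ rather than through $1-\mu$, giving $\frac{2\omega}{1-\varepsilon}\frac{-\underline{h}_2}{h_2}(m_2-m_1)\big(\frac1{r_1}-\frac1{r_2}\big)$.

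The $w$-factor is kept in $v$, where $w^2\approx1$ is harmless over a short segment: $\int_{v_1}^{v_2}\frac{w^2}{r}\,\mathrm{d}v\le\frac32\omega(1-\varepsilon)$ by \eqref{second}. The product then has constant $3\omega^2\le1+\frac{\omega}{2}$.

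Separately, $\Omega^2=\frac14$ holds only on $C\cup\underline{C}$, not at $(u,v_2)$ for $u>u_0$. So you cannot trade $-\underline{h}_2/h_2$ for $-\partial_ur_2/\partial_vr_2$. The paper's argument likewise ends with $-\partial_ur_2/(\Omega_2^{-2}\partial_vr_2)$, which is the form used later.
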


\begin{proof}
Recall that $w$ satisfies the equation

\[  \partial_v \partial_u w +  \Omega^2\frac{Q}{r^2}w=0.   \]As such, 

\begin{align}
\Theta^2 =(\partial_u w_2 - \partial_u w_1)^2 = \big(\int_{v_1}^{v_2} - \Omega^2\frac{Q}{r^2}w \dv \big)^2.
\end{align}We now bound 

\begin{align}\notag 
\big(\int_{v_1}^{v_2} - \Omega^2\frac{Q}{r^2}w \dv \big)^2 =& \big( \int_{v_1}^{v_2} \frac{Q}{r^{3/2}\Omega^{-2}} r^{-1/2}w  \dv\big)^2 \\ \leq& \int_{v_1}^{v_2} \frac{Q^2}{r^2}\cdot \frac{1}{r}\hsp \frac{1}{\Omega^{-2}}\frac{1}{\Omega^{-2}} \dv \cdot \int \frac{w^2}{r}\dv \notag \\ \leq& \int_{r_1}^{r_2} \frac{\omega}{2}\eta_a \frac{1}{r}\frac{1}{\Omega^{-2} \partial_v r}\frac{-\partial_u r}{-\Omega^{-2}\partial_u r}\hsp \text{d}r \int_{v_1}^{v_2}\frac{w^2}{r} \dv. \label{thetafirst}
\end{align}For the first integral, given that $\partial_u \partial_v r\leq 0$ from Proposition \ref{propmonotonicity}, we have $\partial_u r_2 \leq \partial_u r$. Moreover, since $\Omega_2^{-2}\partial_v r_2 \leq \Omega^{-2}\partial_v r$ (from equation \eqref{eq:intro-Raychaudhuri-v}), we have

\begin{align} \label{thetatwo}
&\int_{r_1}^{r_2} \frac{\omega}{2}\eta_a\frac{1}{r}\frac{1}{\Omega^{-2}\partial_v r}\frac{-\partial_u r}{-\Omega^{-2}\partial_u r} \hsp \text{d}r\notag \\ \leq& - \frac{2\omega}{1-\varepsilon}\frac{\partial_u r_2}{\Omega_{2}^{-2}\partial_v r_2}(m_2-m_1)\big(\frac{1}{r_1}-\frac{1}{r_2}\big).
\end{align}We moreover have \begin{align}
\inta \frac{w^2}{r} \dv = &\inta \frac{1}{r} \big(\inta \partial_{v^{\prime}}w\hspace{1mm} \text{d}v^{\prime} +w_1\big)^2 \dv \notag \\ \leq& 2 \inta \bigg[ \frac{1}{r}\big(\inta \partial_{v^{\prime}}w \dvprime\big)^2 +\frac{1}{r} w_1^2 \bigg] \dv \notag \\ \leq& \frac{2}{r_1} \inta \bigg[ (v-v_1) \inta (\partial_{v^{\prime}}w)^2 \dvprime +w_1^2 \bigg] \dv \notag\\ \leq& \frac{2}{r_1} \inta \bigg[ (v-v_1) \inta \frac{-\Omega^{-2}\hb (\partial_{v^{\prime}}w)^2}{-\Omega^{-2}\hb} \dvprime +w_1^2 \bigg] \dv \notag\\ \leq& \frac{2}{r_1}(v_2-v_1) \cdot \frac{2}{1-\varepsilon}\inta \int_{v_1}^v \partial_{v^{\prime}}m \hspace{1mm} \text{d}v^{\prime}\dv +\frac{2}{r_1} \inta w_1^2 \dv \notag \\ \leq& \frac{4}{r_1(1-\varepsilon)}(v_2-v_1)^2 (m_2-m_1) +\frac{2}{r_1}(v_2-v_1) w_1^2\notag \\ \leq& \frac{2(v_2-v_1)^2}{1-\varepsilon} \frac{r_2}{r_1}\frac{2 (m_2-m_1)}{r_2} + 2(v_2-v_1) w_1 \notag \\ \leq& \frac{3 (v_2-v_1)^2}{1-\varepsilon}+ 2(v_2-v_1)L, \label{fi2}
\end{align}where we have used the fact that $\frac{r_2}{r_1}\leq \frac{3}{2}$ and the fact that $\eta = \frac{2(m_2-m_1)}{r_2}<1$, because of the assumption that there are no trapped surfaces or MOTS. We thus have:

\begin{align}
\int_{v_1}^{v_2} \frac{w^2}{r} \dv \leq \frac{3 (v_2-v_1)^2}{1-\varepsilon}+ 2(v_2-v_1)L\leq \frac{3}{2}\omega (1-\varepsilon) ,
\end{align}where we have used assumption \eqref{second}. Plugging this back to \eqref{thetafirst} and using \eqref{thetatwo}-\eqref{fi2}, we obtain

\begin{align}
\Theta^2 \leq& \frac{2\omega}{1-\varepsilon}\cdot \frac{3}{2} \omega (1-\varepsilon)\cdot  \frac{-\partial_u r_2}{\Omega_{2}^{-2}\partial_v r_2}(m_2-m_1)\big(\frac{1}{r_1}-\frac{1}{r_2}\big)\notag \\ \leq& 3\hsp \omega^2 \frac{-\partial_u r_2}{\Omega_{2}^{-2}\partial_v r_2}(m_2-m_1)\big(\frac{1}{r_1}-\frac{1}{r_2}\big)\notag \\ \leq& \big(1+\frac{\omega}{2}\big) \frac{-\partial_u r_2}{\Omega_{2}^{-2}\partial_v r_2}(m_2-m_1)\big(\frac{1}{r_1}-\frac{1}{r_2}\big),\notag
\end{align}because $0<\omega<\frac{2}{3}$ implies $3\hsp \omega^2< \frac{\omega}{2}+1$. The result follows.

\end{proof}

\noindent We proceed to prove estimates for $h$:

\begin{lemma} \label{hlemma}
Assume that $\eta \geq \frac{8 \varepsilon}{\omega}$, the initial data along $\Cbar$ is not magnetic-supercharged and that $\mathcal{D}(0,v_1)\cap \mathcal{R}$ is devoid of trapped surfaces. Then

\[ \frac{h_2(u)}{h_1(u)}\leq e^{-(1-\frac{\omega}{2})}\eta(u).      \]
\end{lemma}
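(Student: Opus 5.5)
The plan is to integrate the Raychaudhuri equation \eqref{eq:intro-Raychaudhuri-v} in $v$ along $u=\mathrm{const}$ from $v_1$ to $v_2$. I will use the mass equation \eqref{eq:intro-mass-v} to trade $(\partial_v w)^2$ for $\partial_v m$, and then control the magnetic error with the first part of Lemma \ref{lemmaomegaeta}. Throughout, $\eta=\eta(u)=2(m_2-m_1)/r_2$ as in Proposition \ref{propositionforQ}, and the statement is read as $h_2/h_1\le e^{-(1-\frac{\omega}{2})\eta}$.

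\emph{Step 1: rewrite $\partial_v\log h$.} Since $h>0$ in $\mathcal{R}$ (no trapped surfaces, $\partial_v r>0$), we have
\[
\partial_v\log h=-\frac{2(\partial_v w)^2}{r\,\Omega^2 h}=-\frac{2(\partial_v w)^2}{r\,\partial_v r}.
\]
From \eqref{eq:intro-mass-v}, $-\Omega^{-2}\hb\,(\partial_v w)^2=\partial_v m-\partial_v r\,\frac{Q^2}{2r^2}\ge 0$. Using $r\,\partial_v r\,(-\Omega^{-2}\hb)=r\,h\,(-\hb)=r(1-\mu)$, this gives
\[
\partial_v\log h=-\frac{2}{r(1-\mu)}\Big(\partial_v m-\partial_v r\frac{Q^2}{2r^2}\Big).
\]
The bracket is nonnegative and $0<1-\mu\le 1$ in the untrapped region, so dropping the factor $(1-\mu)^{-1}\ge1$ only weakens the inequality in the right direction:
\[
\partial_v\log h\le -\frac{2\partial_v m}{r}+\frac{Q^2}{r^3}\,\partial_v r .
\]

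\emph{Step 2: integrate in $v$.} Because $\partial_v m\ge0$ and $r\le r_2$, the first term integrates to at most $-2(m_2-m_1)/r_2=-\eta$. Changing variables to $r$ in the second term gives
\[
\log\frac{h_2}{h_1}\le -\eta+\int_{r_1}^{r_2}\frac{Q^2}{r^2}\frac{dr}{r}.
\]

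\emph{Step 3: bound the magnetic error.} This is the main obstacle. The second part of Lemma \ref{lemmaomegaeta} needs $\eta_a\ge 8\varepsilon/\omega$ at every intermediate $v_a$, while we only assume it at $v_2$. So I will use the first part of Lemma \ref{lemmaomegaeta}, $Q_a^2/r_a^2\le\frac{\omega}{4}\eta_a+2\varepsilon$, which is valid on $[u_0,u_*]$ by Proposition \ref{propmonotonicity}.

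For the $\eta_a$ piece, note $\eta_a r_a=2(m_a-m_1)\le 2(m_2-m_1)=\eta r_2$. Hence
\[
\int_{r_1}^{r_2}\frac{\omega}{4}\frac{\eta_a}{r}\,dr\le\frac{\omega}{4}\eta r_2\Big(\frac1{r_1}-\frac1{r_2}\Big)=\frac{\omega}{4}\eta\,\delta(u)\le\frac{\omega}{8}\eta,
\]
using $\delta(u)\le\frac12$ from Proposition \ref{propmonotonicity}.

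For the $\varepsilon$ piece, $\int_{r_1}^{r_2}2\varepsilon\,dr/r=2\varepsilon\log(1+\delta(u))\le 2\varepsilon\,\delta(u)\le\varepsilon$. The hypothesis $\eta\ge 8\varepsilon/\omega$ then gives $\varepsilon\le\frac{\omega}{8}\eta$.

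Combining Steps 2 and 3,
\[
\log\frac{h_2}{h_1}\le-\Big(1-\frac{\omega}{4}\Big)\eta\le-\Big(1-\frac{\omega}{2}\Big)\eta,
\]
and exponentiating yields the claim. The only delicate points are the sign bookkeeping in Step 1, namely dropping $(1-\mu)^{-1}$ only after confirming the bracket is nonnegative, and avoiding the pointwise hypothesis at intermediate $v_a$ in Step 3.
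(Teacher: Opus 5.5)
Your proof is correct. Its skeleton matches the paper's: divide the Raychaudhuri equation by $h$, trade $(\partial_v w)^2$ for $\partial_v m$ using the mass equation, drop the factor $(1-\mu)^{-1}\ge 1$, and bound the magnetic error through Lemma \ref{lemmaomegaeta}.

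The real difference is how that error is handled. The paper applies the \emph{second} part of Lemma \ref{lemmaomegaeta}, $Q^2/r^2\le\frac{\omega}{2}\eta_a$, at every $v\in[v_1,v_2]$. That bound is conditional on $\eta_a\ge 8\varepsilon/\omega$ at each such $v_a$. Since $\eta_a=0$ at $v_a=v_1$, it would force $Q(u,v_1)=0$, which fails for nontrivial incoming data. You instead use the unconditional bound $\frac{\omega}{4}\eta_a+2\varepsilon$, together with $\eta_a r_a\le \eta r_2$, $\delta(u)\le\frac12$, and the hypothesis $\eta\ge 8\varepsilon/\omega$ only at $v_2$. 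This closes that gap and even gives the stronger exponent $-(1-\frac{\omega}{4})\eta$.

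Two minor remarks:
\begin{itemize}
\item The paper replaces $1/r$ by $1/r_2$ in the whole nonnegative bracket, not just in the $\partial_v m$ term. This gives the smaller error $\frac{1}{r_2}\int_{r_1}^{r_2}\frac{Q^2}{r^2}\,dr$, and your fix works equally well with that version.
\item The inequality $1-\mu\le 1$ needs $m\ge 0$. You should justify this from $m(u,v_1)\ge|Q|(u,v_1)\ge 0$ and $\partial_v m\ge 0$; the paper also leaves it implicit.
\end{itemize}
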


\begin{proof}
We recall the equation \[\partial_{v} h = - 2\Omega^{-2}\frac{(\partial_v w)^2}{r}.\] Dividing both sides by $h$ and integrating from $v_1$ to $v_2$, we have

\begin{align}
\ln \lvert h_2 \rvert - \ln \lvert h_1 \rvert = - 2\int_{v_1}^{v_2} \Omega^{-2} \frac{(\partial_v w)^2}{rh}. \end{align}

We have \be \frac{1}{r-2m}\big(\partial_v m - \frac{\Omega^2 Q^2 h}{2r^2}\big) = \Omega^{-2} \frac{(\partial_v w)^2}{r h}.  \ee

Therefore, for any $u \in [u_0, u_*]$, there holds

\begin{align}
\ln\big( \frac{h_2}{h_1}\big) =& -2 \int_{v_1}^{v_2} \frac{1}{r-2m}\big(\partial_v m - \Omega^2 \frac{Q^2 h}{2r^2}\big) \dv \notag \\ \leq& -2 \int_{v_1}^{v_2}\frac{1}{r}\big(\partial_v m - \Omega^2 \frac{ Q^2 h}{2r^2}\big) \dv \leq - \frac{2}{r_2} \int_{v_1}^{v_2} \big( \partial_v m - \frac{Q^2 \partial_v r}{2r^2}\big) \dv \notag \\ \leq& -\eta + \frac{2}{r_2} \int_{r_1}^{r_2}\frac{Q^2}{2r^2} \hspace{.5mm} \text{d}r. 
\end{align}By Lemma \ref{lemmaomegaeta}, we have \[  \frac{Q(u,v)^2}{r(u,v)^2}\leq \frac{\omega}{2} \frac{2(m(u,v)-m(u,v_1)}{r(u,v)}.    \]Therefore, 

\begin{align}
\frac{2}{r_2} \int_{r_1}^{r_2}\frac{Q^2}{2r^2} \hspace{.5mm} \text{d}r \leq& \frac{\omega}{2r_2}\int_{r_1}^{r_2} \frac{2(m(u,v)-m(u,v_1))}{r(u,v)}\hspace{.5mm}\text{d}r \notag \\ \leq& \omega \hsp \frac{m_2-m_1}{r_2} \ln\big(\frac{r_2}{r_1}\big) \leq \frac{\omega}{2} \ln\big(\frac{3}{2}\big) \eta \leq \frac{\omega}{2}\hsp  \eta. 
\end{align}Therefore, \[ \ln\big(\frac{h_2}{h_1}\big) \leq -(1-\frac{\omega}{2})\eta.   \]
\end{proof}

\subsection{Finishing the proof}\label{sec:trapped-surface}

We are now ready to provide a lower bound on $\frac{\text{d}\eta}{\text{d}u}$. As is obvious, Lemmata \ref{lemmaomegaeta}, \ref{Thetalemma} and \ref{hlemma} are only applicable in a region where $\eta \geq \frac{8 \varepsilon}{\omega}$. However, this bound itself is also required to control $\frac{\text{d}\eta}{\text{d}u}$ and thus establish $\eta \geq \frac{8 \varepsilon}{\omega}$. We will therefore employ a bootstrap argument. 

\begin{lemma}
Assume that the region $\mathcal{D}(0,v_1)\cap \mathcal{R}$ is free of trapped surfaces and the initial data along $\Cbar$ is magnetically subextremal. Then, if $\eta_0 \geq \frac{13\varepsilon}{\omega}+g_{\omega}(\delta_0)$, we have $\eta(x) \geq \frac{12\varepsilon}{\omega}$ for all $x= x(u)\in [\frac{3\delta_0}{1+\delta_0},1]$. Here, $g_{\omega}(x)$ is defined as:

\[ g_{\omega}(x):= \frac{1+\frac{\omega}{2}}{1-\frac{\omega}{2}} \hsp \frac{1}{(1+x)^2}\bigg(\bigg( \frac{2^{1-\frac{\omega}{2}}}{\omega}+ \frac{1}{2^{1+\frac{\omega}{2}}(1+\frac{\omega}{2})}\bigg)x^{1-\frac{\omega}{2}}-\frac{2}{\omega}x-\frac{1}{1+\frac{\omega}{2}}x^2\bigg).    \]
\end{lemma}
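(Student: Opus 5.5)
We argue by a continuity (bootstrap) argument in the variable $x=x(u)=r_2(u)/r_2(u_0)$, which decreases from $1$ to $\frac{3\delta_0}{1+\delta_0}$ as $u$ runs over $[u_0,u_*]$. Let
\[
x_b:=\inf\Big\{x\in\big[\tfrac{3\delta_0}{1+\delta_0},1\big] : \eta(y)\geq \tfrac{8\varepsilon}{\omega}\ \text{for all } y\in[x,1]\Big\}.
\]
At $x=1$ we have $\eta(1)=\eta_0\geq \frac{13\varepsilon}{\omega}+g_\omega(\delta_0)$. We will check that $g_\omega(\delta_0)\geq 0$ for $0<\delta_0<\frac12$, or else absorb it into the second alternative of the threshold. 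Hence $\eta(1)>\frac{8\varepsilon}{\omega}$, and by continuity $x_b<1$. On $[x_b,1]$ the hypothesis $\eta\geq \frac{8\varepsilon}{\omega}$ holds, so Lemmas \ref{lemmaomegaeta}, \ref{Thetalemma} and \ref{hlemma} are all available there. The goal is to show that on this interval in fact $\eta\geq\frac{12\varepsilon}{\omega}$, which is strictly larger than $\frac{8\varepsilon}{\omega}$. Continuity then forces $x_b=\frac{3\delta_0}{1+\delta_0}$ and proves the claim.

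The core step is to turn \eqref{detadx} into a linear differential inequality on $[x_b,1]$. We treat the three terms as follows.

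\begin{itemize}
\item The term $Q_2^2/(x r_2^2)$ is bounded by Lemma \ref{lemmaomegaeta} by $\frac{\omega}{2}\frac{\eta}{x}$.
\item For the $\partial_u w$ terms, write $\partial_u w_2=\partial_u w_1+\Theta$. Then
\[
(\partial_u w_2)^2-\tfrac{h_1}{h_2}(\partial_u w_1)^2\leq \Theta^2\Big(1-\tfrac{h_2}{h_1}\Big)^{-1},
\]
obtained by minimizing the quadratic form in $\partial_u w_1$. This is legitimate because $h_2<h_1$ by Lemma \ref{hlemma}.
\item The factor $\Theta^2$ is estimated by Lemma \ref{Thetalemma}. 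The factor $h_2/h_1\leq e^{-(1-\omega/2)\eta}$ comes from Lemma \ref{hlemma}.
\end{itemize}

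We use $\frac{1}{r_1}-\frac{1}{r_2}=\frac{\delta}{r_2}$ with $\delta(u)\leq \frac{\delta_0}{x(1+\delta_0)-\delta_0}$, which is available from the proof of Proposition \ref{propmonotonicity}. The contribution of $\Theta$ then becomes, up to the factor $1+\frac{\omega}{2}$, of the form $\frac{\eta}{x}\cdot\frac{\delta_0}{x(1+\delta_0)-\delta_0}\cdot(1-e^{-(1-\omega/2)\eta})^{-1}$. Using $1-e^{-s}\geq s/(1+s)$ together with $\eta<1$, this term is bounded by a function of $x$ alone, essentially $\frac{C_\omega}{x}\cdot\frac{\delta_0}{x(1+\delta_0)-\delta_0}$.

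Collecting these estimates yields an inequality of the shape \eqref{etafg}:
\[
\frac{d\eta}{dx}\leq -\Big(1-\frac{\omega}{2}\Big)\frac{\eta}{x}+\frac{f(x)}{x}.
\]
Multiplying by the integrating factor $x^{1-\omega/2}$ and integrating from $x$ to $1$ (note that $x$ decreases as $u$ increases, so the signs must be tracked carefully) gives
\[
\eta(x)\geq x^{-(1-\omega/2)}\Big(\eta_0-\int_x^1 y^{-\omega/2}f(y)\,dy\Big).
\]
The integral is elementary, a combination of powers $y^{-\omega/2}$, $y^{1-\omega/2}$ and $y^{-1-\omega/2}$ after expanding $\frac{\delta_0}{y(1+\delta_0)-\delta_0}$ on the range $y\geq\frac{3\delta_0}{1+\delta_0}$. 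Evaluating it at the endpoint should reproduce the explicit function $g_\omega(\delta_0)$, and that is how the constants $2^{1\pm\omega/2}$, $\frac{2}{\omega}$ and $(1+x)^{-2}$ arise. Since $x^{-(1-\omega/2)}\geq 1$, we conclude $\eta(x)\geq \eta_0-g_\omega(\delta_0)\geq \frac{13\varepsilon}{\omega}>\frac{12\varepsilon}{\omega}$, which closes the bootstrap.

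The main obstacle is this last computation. We need to verify that the integrated error is bounded by exactly $g_\omega(\delta_0)$, uniformly in $x\in[\frac{3\delta_0}{1+\delta_0},1]$, and we must handle the $\eta$-dependence hidden in $(1-h_2/h_1)^{-1}$ without losing more than the factor $\frac{1+\omega/2}{1-\omega/2}$. If the crude bound $s/(1+s)$ is too lossy, the fallback is to keep $\eta$ inside the estimate and compare with the ODE solution directly via a second, nested continuity argument.
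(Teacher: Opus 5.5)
Your structure matches the paper's. You bootstrap in $x$, split $\partial_u w_2=\partial_u w_1+\Theta$, and maximize the quadratic in $\partial_u w_1$; your bound $(1-h_2/h_1)^{-1}\le 1+\frac{1}{(1-\frac{\omega}{2})\eta}$ is exactly the paper's. You also use Lemmas \ref{lemmaomegaeta}, \ref{Thetalemma}, \ref{hlemma} and the bound $\delta\le\frac{\delta_0}{x(1+\delta_0)-\delta_0}$, as the paper does.

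The gap is in the step that fixes the threshold. After Lemma \ref{Thetalemma}, the $\Theta$-contribution to \eqref{detadx} is
\[
\Big(1+\frac{\omega}{2}\Big)\frac{\delta\,\eta}{x}+\frac{1+\frac{\omega}{2}}{1-\frac{\omega}{2}}\,\frac{\delta}{x},
\]
and you use $\eta<1$ on the first summand, which turns it into forcing. This multiplies the forcing by $2-\frac{\omega}{2}$. Your error term becomes
\[
S(x)=C_\omega\int_x^1 y^{-\omega/2}\frac{\delta_0}{y(1+\delta_0)-\delta_0}\,dy,\qquad C_\omega=\Big(1+\frac{\omega}{2}\Big)\frac{2-\omega/2}{1-\omega/2}.
\]
This integral is of incomplete-beta type, not a finite combination of powers, so it cannot reproduce $g_\omega(\delta_0)$. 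It is in fact larger. As $\delta_0\to0$, both $S(x_*)$ and $g_\omega(\delta_0)$ are of order $\delta_0^{1-\omega/2}$, and
\[
\frac{S(x_*)}{g_\omega(\delta_0)}\longrightarrow\frac{\big(2-\frac{\omega}{2}\big)\int_3^\infty\frac{\sigma^{-\omega/2}}{\sigma-1}\,d\sigma}{\int_2^\infty\frac{\tau+1}{\tau^{2+\omega/2}}\,d\tau},
\]
which lies between about $1.4$ and $2$ for $\omega\in(0,\frac23)$. So take $\delta_0$ small, $\varepsilon\ll g_\omega(\delta_0)$, and $\eta_0$ just above $\frac{13\varepsilon}{\omega}+g_\omega(\delta_0)$. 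Then your lower bound $x^{-(1-\omega/2)}(\eta_0-S(x))$ is negative at $x_*$, and the bootstrap does not close. Your argument proves the lemma only under a strictly stronger hypothesis.

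The missing idea is to keep the $\eta$-linear piece in the coefficient instead of absorbing it. With Lemma \ref{lemmaomegaeta} the inequality then reads $\eta'+\frac{g(x)}{x}\eta\le\frac{f(x)}{x}$, where
\[
g(x)=1-\frac{\omega}{2}-\Big(1+\frac{\omega}{2}\Big)\frac{\delta_0}{x(1+\delta_0)-\delta_0},\qquad f(x)=\frac{1+\frac{\omega}{2}}{1-\frac{\omega}{2}}\,\frac{\delta_0}{x(1+\delta_0)-\delta_0}.
\]
This is still linear, so your fallback of a nested continuity argument is unnecessary.

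\begin{itemize}
\item The integrating factor is $e^{-G(x)}=x^2/(x(1+\delta_0)-\delta_0)^{1+\omega/2}$, and the forcing integrand becomes $\frac{1+\omega/2}{1-\omega/2}\,\delta_0 t\,(t(1+\delta_0)-\delta_0)^{-2-\omega/2}$.
\item The substitution $z=t(1+\delta_0)-\delta_0$ integrates this exactly in powers of $z$. That is where $(1+\delta_0)^{-2}$, $2^{1\pm\omega/2}$ and $\frac{2}{\omega}$ come from, and the integral over $[x_*,1]$ equals $g_\omega(\delta_0)$.
\item The price is that discarding the prefactor now needs $e^{G(x)}\ge1$ on $[x_*,1]$, which is not automatic. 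The logarithmic derivative of $e^{G}$ is $\frac{(1+\frac{\omega}{2})\delta_0-(1-\frac{\omega}{2})z}{xz}$. This is negative because $z\ge2\delta_0$ there and $\frac{1+\omega/2}{1-\omega/2}<2$ exactly when $\omega<\frac23$.
\end{itemize}
Hence $e^{G}$ decreases to $e^{G(1)}=1$, and $\eta\ge\eta_0-g_\omega(\delta_0)\ge\frac{13\varepsilon}{\omega}$ follows, which closes the bootstrap.
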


\begin{proof}
Let $\upr:= \sup\{ u \in [u_0,u_*] \mid \eta(s) \geq \frac{12\varepsilon}{\omega} \hsp \text{for} \hsp \text{all} \hsp s\in [u_0,u]\}.$ The quantity we will estimate is $\frac{\text{d}\eta}{\text{d}x}$, where $x= \frac{r_2(u)}{r_2(u_0)}$. There holds 

\begin{align} \label{coreinternal} \frac{\text{d}\eta}{\text{d}x} =& \frac{\frac{\text{d}\eta}{\text{d}u}}{\frac{\text{d}x}{\text{d}x}}= \frac{r_2(u_0)}{\hba_2}\big(-\frac{2\hsp \hba_2}{r_2^2}(m_2-m_1) + \frac{2}{r_2}\partial_u (m_2-m_1)\big) \notag \\ =& -\frac{\eta}{x}+\frac{2}{x \hsp \hba_2} \bigg(\frac{\hba_2 \hsp Q_2^2}{2r_2^2}-\hba_2\hsp (\partial_u w_2)^2 - \frac{\hba_1 \hsp Q_1^2}{2\hsp r_1^2}+h_1(\partial_u w_1)^2\bigg) \notag \\ \leq& -\frac{\eta}{x}-\frac{2\hsp h_2}{x \hsp \hba_2}\hsp \big((\partial_u w_2)^2 - \frac{h_1}{h_2}(\partial_u w_1)^2\big) + \frac{Q_2^2}{x\hsp r_2^2}. \end{align} 
We now focus our attention on the region $[u_0, \upr]$. Since $\eta \geq \frac{12\varepsilon}{\omega}$ by the bootstrap asumption in $[x^\prime,1]$, there also holds $\eta \geq \frac{8\varepsilon}{\omega}$, whence the conditions of Lemma \ref{hlemma} hold. We thus obtain:

\begin{align}
(\partial_u w_2)^2 - \frac{h_1}{h_2}(\partial_u w_1)^2 \leq& (\partial_u w_2)^2 - e^{\eta(1-\frac{\omega}{2})} (\partial_u w_1)^2\notag \\ =& \Theta^2 +2\Theta \hsp (\partial_u w_1) + \big(1-e^{\eta(1-\frac{\omega}{2})}\big) (\partial_u w_1)^2. \label{intermediate1}
\end{align}The expression above on the right-hand side of \eqref{intermediate1} is then bounded as follows:

\begin{align}
\Theta^2 +2\Theta \hsp (\partial_u w_1) + (1-e^{\eta(1-\frac{\omega}{2})}) (\partial_u w_1)^2 \leq& \big(1 + \frac{1}{e^{\eta(1-\frac{\omega}{2}}-1}\big) \Theta^2 \notag \\ \leq& \big(1 + \frac{1}{\eta(1-\frac{\omega}{2})}\big) \Theta^2,
\end{align}since $\eta(1-\frac{\omega}{2}) \geq 0$. Plugging this bound back to \eqref{coreinternal}, we have

\begin{align}
\frac{\text{d}\eta}{\text{d}x}\leq -\frac{\eta}{x}- \frac{2h_2}{x \underline{h}_2}\big(1 + \frac{1}{\eta(1-\frac{\omega}{2})}\big) \Theta^2 + \frac{Q^2}{xr_2^2}. 
\end{align}

\noindent Applying Lemma \ref{Thetalemma}, we have

\begin{equation}\label{mediumestimate} \frac{\text{d}\eta}{\text{d}x}\leq -\frac{\eta}{x} + \frac{\eta}{x}\big(\frac{\omega}{2}+1\big)\bigg(1+ \frac{1}{\eta(1-\frac{\omega}{2})}\bigg) \big(\frac{r_2}{r_1}-1\big) + \frac{Q_2^2}{xr_2^2}. \end{equation}By employing the monotonicity formula from Proposition 
\ref{propmonotonicity}, we get

\begin{align}\label{boundmonot}
\delta(u)= \frac{r_2(u)-r_1(u)}{r_2(u)-(r_2(u)-r_1(u))}\leq& \frac{r_2(u_0)-r_1(u_0)}{r_2(u_0)-(r_2(u_0)-r_1(u_0))} \notag \\ =& \frac{\delta_0}{x(u)(1+\delta_0)-\delta_0}. 
\end{align}The last term on the right-hand side of \eqref{mediumestimate} will be bounded through the use of Lemma \ref{lemmaomegaeta}. Looking at \eqref{mediumestimate} and using \eqref{boundmonot} together with Lemma \ref{lemmaomegaeta} (which is applicable since $\eta \geq \frac{12\varepsilon}{\omega}>\frac{8\varepsilon}{\omega}$ in our region of interest),

\begin{align}
\frac{\text{d}\eta}{\text{d}x}\leq& \eta \bigg(\big(1+\frac{\omega}{2}\big) \frac{\delta_0}{x^2(1+\delta_0)-x\delta_0)}-\frac{1}{x}\bigg) +\frac{1+\frac{\omega}{2}}{1-\frac{\omega}{2}} \frac{1}{x}\frac{\delta_0}{x(1+\delta_0)-\delta_0}+\frac{Q_2^2}{xr_2^2} \notag \\ \leq&\eta \bigg(\big(1+\frac{\omega}{2}\big) \frac{\delta_0}{x^2(1+\delta_0)-x\delta_0)}-\frac{1}{x}\bigg) +\frac{1+\frac{\omega}{2}}{1-\frac{\omega}{2}} \frac{1}{x}\frac{\delta_0}{x(1+\delta_0)-\delta_0}+\frac{\omega}{2}\frac{\eta}{x} \notag \\ =& -\frac{\eta}{x} \bigg(1-\frac{\omega}{2} - (1+\frac{\omega}{2}) \frac{\delta_0}{x(1+\delta_0)-\delta_0} \bigg) +\frac{1+\frac{\omega}{2}}{1-\frac{\omega}{2}} \frac{1}{x}\frac{\delta_0}{x(1+\delta_0)-\delta_0}.\notag
\end{align}It thus follows that if we define

\[ g(x):=   1-\frac{\omega}{2} - (1+\frac{\omega}{2}) \frac{\delta_0}{x(1+\delta_0)-\delta_0}  \]and \[f(x) :=\frac{1+\frac{\omega}{2}}{1-\frac{\omega}{2}} \frac{1}{x}\frac{\delta_0}{x(1+\delta_0)-\delta_0},  \]we obtain the following differential inequality for all $x\in [x^{\prime},1]$:

\[ \frac{\text{d}\eta}{\text{d}x}+\eta \frac{g(x)}{x}-\frac{f(x)}{x}\leq 0.    \]A straightforward calculation gives, solving the ODE, the bound

\be \eta_0 - e^{-G(x)}\eta(x)\leq F(x), \label{almostlast}\ee

where \be G(x) := \int_x^1 \frac{g(t)}{t} \hsp \text{d}t= \ln \bigg(\frac{(x(1+\delta_0)-\delta_0)^{1+\frac{\omega}{2}}}{x^2}\bigg)\ee and

\begin{align}
&F(x) := \int_x^1 e^{-G(t)}\frac{f(t)}{t} \text{d}t \notag \\ =& \frac{1+\frac{\omega}{2}}{1-\frac{\omega}{2}} \hsp \frac{\delta_0}{(1+\delta_0)^2}\bigg(\frac{2}{\omega}\hsp \frac{1}{(x(1+\delta_0)-\delta_0)^{\frac{\omega}{2}}}-\frac{2}{\omega}+\frac{1}{1+\frac{\omega}{2}}\hsp\frac{\delta_0}{(x(1+\delta_0)-\delta_0)^{\frac{\omega}{2}}}- \frac{\delta_0}{1+\frac{\omega}{2}}\bigg). 
\end{align}Since $F$ is monotonically decreasing, it attains its maximum in the interval $\big[ \frac{3\delta_0}{1+\delta_0},1\big]$ at $x= \frac{3 \delta_0}{1+\delta_0}$. Therefore, 

\begin{align}
F(x)\leq F\big(\frac{3\delta_0}{1+\delta_0}\big)= g_{\omega}(\delta_0).
\end{align}Substituting into \eqref{almostlast}, we get

\begin{align} \label{last}\notag \eta(x)\geq e^{G(x)}(\eta_0-F(x))=& \frac{(x(1+\delta_0)-\delta_0)^{1+\frac{\omega}{2}}}{x^2}(\eta_0-F(x)) \\ \geq& \frac{(x(1+\delta_0)-\delta_0)^{1+\frac{\omega}{2}}}{x^2}(\eta_0-g_{\omega}(\delta_0)). \end{align}Since $\omega<\frac{2}{3}$, it is a straightforward calculation that 

\[  \sup_{x \in[x_*,1]} \frac{x^2}{\big(x(1+\delta_0)-\delta_0\big)^{1+\frac{\omega}{2}}}=1.   \]Taking into account the hypothesis 

\[ \eta_0 > \frac{13 \varepsilon}{\omega}+g_{\omega}(\delta_0),   \]we obtain the inequality

\[  \eta_0 > \frac{13\varepsilon}{\omega}+g_{\omega}(\delta_0) \geq \frac{13 \varepsilon}{\omega}  \frac{x^2}{\big(x(1+\delta_0)-\delta_0\big)^{1+\frac{\omega}{2}}} +g_{\omega}(\delta_0), \]for $x\in[x^{\prime},1]$. Substituting the above into \eqref{last}, we obtain \[ \eta(x) \geq \frac{13\varepsilon}{\omega}, \hspace{2mm} \text{for all} \hspace{1mm} x\in[x^{\prime},1].    \]However, by continuity of $\eta,$ we can find $\tilde{x}<x^{\prime}$ such that for all $x \in[\tilde{x},1]$ there holds $\eta(x)\geq \frac{12\varepsilon}{\omega}$, or equivalently $\eta(u)\geq \frac{12\varepsilon}{\omega}$ for all $u \in [u_0,\tilde{u}]$, contradicting the supremum property of $u^{\prime}$.
\end{proof}
\noindent We finally prove Theorem \ref{maintechnical}.

\begin{proof}
Assume by contradiction that $\mathcal{R}$ contains no trapped surfaces or MOTS, equivalently that $\partial_v r_2 >0$ for all $u \in [u_0,u_{*}]$. Then Lemma \ref{Thetalemma} applies and \eqref{almostlast} holds for $x \in [x_*,1]$. This equation rewrites as 
 \be \eta_0 \leq e^{-\eta(x)}G(x)+F(x) < e^{-G(x)}+F(x).\ee
By setting $x=x_*= \frac{3 \delta_0}{1+\delta_0},$ we get, via a straightforward calculation, 

\be e^{-G(x_*)}+F(x_*) < \frac{9}{2^{1+\frac{\omega}{2}}(1+\delta_0)^2} \delta_0^{1-\frac{\omega}{2}} +g_{\omega}(\delta_0), \ee
hence $\eta_0 <\frac{9}{2^{1+\frac{\omega}{2}}(1+\delta_0)^2} \delta_0^{1-\frac{\omega}{2}} +g_{\omega}(\delta_0)$. This is a contradiction and completes the desired proof.
\end{proof}

\end{document}